\def\BibTeX{{\rm B\kern-.05em{\sc i\kern-.025em b}\kern-.08em
    T\kern-.1667em\lower.7ex\hbox{E}\kern-.125emX}}
\newcommand{\ignore}[1]{}
\titleformat*{\subsection}{\fontsize{11}{12} \selectfont\bfseries}
\titleformat*{\subsubsection}{\bfseries}
\titlespacing*{\subsection}{0pt}{1.75ex plus 0.5ex minus 0.5ex}{0.5ex plus .25ex minus .25ex}
\newcommand{\eg}{{e.g.}}
\newcommand{\ie}{{i.e.}}
\newcommand{\etc}{{etc.}}
\renewcommand{\S}{Section}
\renewcommand{\paragraph}[1]{\noindent{\textbf{#1:}}}
\newcommand{\comment}[1]{}
\newcommand{\arxivonly}[1]{#1}
\newcommand{\trusts}{\ensuremath{\textit{\, trusts} \,\,}}
\newcommand{\sattestor}[1]{\ensuremath{\mathsf{sattestor}#1}}
\newcommand{\sattestordel}[1]{\ensuremath{\mathsf{sattestor}^{*}#1}}
\newcommand{\satt}[1]{\ensuremath{\mathsf{sattestor}(#1)}}
\newcommand{\sattdel}[1]{\ensuremath{\mathsf{sattestor}^{*}(#1)}}
\newcommand{\sattboth}[1]{\ensuremath{\mathsf{sattestor}^{(*)}(#1)}}
\newcommand{\bound}[1]{\ensuremath{\mathsf{bound}#1}}
\newcommand{\mylabels}[1]{\ensuremath{\mathsf{labels}(#1)}}
\newcommand{\sattfree}[1]{\ensuremath{\mathsf{sattfree}(#1)}}
\newcommand{\mynames}[1]{\ensuremath{\mathsf{IDs}(#1)}}
\newcommand{\blabel}[1]{\ensuremath{\mathsf{blabel}(#1)}}
\newcommand{\blabels}[1]{\ensuremath{\mathsf{blabels}(#1)}}
\newcommand{\says}{\ensuremath{\, \textit{says} \,\,}}
\newcommand{\domain}[1]{\ensuremath{D_{#1}}}
\newcommand{\saname}[1]{\ensuremath{O_{#1}}}
\newcommand{\satlabel}[1]{\ensuremath{\ell_{#1}}}
\newcommand{\sflabelset}[1]{\ensuremath{\Lambda_{#1}}}
\newcommand{\anylabelset}[1]{\ensuremath{\Gamma_{#1}}}
\newcommand{\allsflabels}{\ensuremath{\widehat{\Lambda}}}
\newcommand{\alllabels}{\ensuremath{\widehat{\Gamma}}}
\newcommand{\upair}[2]{\ensuremath{\left(#1,#2\right)}}
\newcommand{\tpair}[2]{\ensuremath{\left<#1,#2\right>}}
\newcommand{\ltriple}[3]{\ensuremath{\left<#1,#2,#3\right>}}
\newcommand{\lasdel}[2]{\ensuremath{\mathsf{DEL}(#1,#2)}}
\newcommand{\lassatt}[2]{\ensuremath{\mathsf{SATT}(#1,#2)}}
\newcommand{\lassattdel}[2]{\ensuremath{\mathsf{SATT}^{*}(#1,#2)}}
\newcommand{\lleq}{\ensuremath{\succeq}}
\newcommand{\llt}{\ensuremath{\succ}}
\newcommand{\lgeq}{\ensuremath{\preceq}}
\newcommand{\lgt}{\ensuremath{\prec}}
\newcommand{\wkbroadens}{\lleq}
\newcommand{\broadens}{\llt}
\newcommand{\wknarrows}{\lgeq}
\newcommand{\narrows}{\lgt}
\newcommand{\mcond}{\rightarrow}
\def\semcons{\, |\kern-.3em\models\,} 
\def\notvdash{\mathrel{\mathpalette\c@ncel\vdash}}
\def\notmodels{\mathrel{\mathpalette\c@ncel\models}} 
\newcommand{\itrust}[1]{\ensuremath{\mathsf{T}_{#1}^0}}
\newcommand{\las}[1]{\ensuremath{\mathsf{LAS}_{#1}}}
\newcommand{\tstmts}{\ensuremath{\mathsf{TS}}}
\newcommand{\sata}[1]{\ensuremath{(\domain{#1},\saname{#1})}}
\newcommand{\bsata}[1]{\ensuremath{\langle\domain{#1},\saname{#1}\rangle}}
\newcommand{\blsata}[1]{\ensuremath{\langle\domain{#1},\saname{#1},\satlabel{#1}\rangle}}
\newcommand{\bsinglesata}[1]{\ensuremath{\langle\domain{#1},\saname{#1},\satlabel{#1}\rangle}}
\newcommand{\maybe}[1]{\ensuremath{\textit{may-be}_{#1}}}
\newcommand{\present}{\ensuremath{\textit{Present}}}
\newcommand{\resulta}{\ensuremath{\textit{Resulta}}}
\newcommand{\resultb}{\ensuremath{\textit{Resultb}}}
\newcommand{\past}{\ensuremath{\textit{Past}}}
\newcommand{\rest}{\ensuremath{\textit{Rest}}}
\newcommand{\maxlim}[2]{\ensuremath{\mathsf{maxlim}(#1#2)}}
\newcommand{\lc}[1]{\ensuremath{\mathsf{lc}(#1)}}
\newcommand{\lalc}[2]{\ensuremath{\mathsf{lc}(#1,#2)}}
\newcommand{\dlen}[1]{\ensuremath{\mathsf{mlen}(#1)}}
\newcommand{\Der}{\ensuremath{\textit{Der}}}
\newtheorem{theorem}{Theorem}[section]
\newtheorem{lemma}[theorem]{Lemma}
\newtheorem{proposition}[theorem]{Proposition}
\theoremstyle{definition}
\newtheorem{definition}[theorem]{Definition}
\theoremstyle{remark}
\newtheorem{q}{Question}
\newtheorem{example}[q]{Example}
\author{
\IEEEauthorblockN{Aaron D. Jaggard}
\IEEEauthorblockA{U.S. Naval Research Laboratory\\
Washington DC, USA \\
aaron.d.jaggard.civ@us.navy.mil}
\and
\IEEEauthorblockN{Paul Syverson}
\IEEEauthorblockA{U.S. Naval Research Laboratory\\
Washington DC, USA \\
paul.f.syverson.civ@us.navy.mil}
\and
\IEEEauthorblockN{Catherine Meadows}
\IEEEauthorblockA{U.S. Naval Research Laboratory\\
Washington DC, USA \\
catherine.a.meadows2.civ@us.navy.mil}
}
\title{A Logic of Sattestation}
\begin{document}

\maketitle

\newcommand\distfootnote[1]{%
  \begingroup
  \renewcommand\thefootnote{}\footnote{#1}%
  \addtocounter{footnote}{-1}%
  \endgroup
}

\begin{abstract}
We introduce a logic for reasoning about contextual trust for web
addresses, provide a Kripke semantics for it, and prove its soundness under
reasonable assumptions about principals' policies.

Self-Authenticating Traditional Addresses (SATAs) are valid DNS
addresses or URLs that are generally meaningful---to both humans and
web infrastructure---and contain a commitment to a public key in the
address itself. Trust in web addresses is currently established via
domain name registration, TLS certificates, and other hierarchical
elements of the internet infrastructure. SATAs support such structural
roots of trust but also complementary contextual roots associated with
descriptive properties. The existing structural roots leave web
connections open to a variety of well-documented and significant
hijack vulnerabilities.  Contextual trust roots provide, among other
things, stronger resistance to such vulnerabilities.

We also consider labeled SATAs, which include descriptive properties
such as that a SATA is an address for a news organization, a site
belonging to a particular government or company, a site with
information about a certain topic, etc.  Our logic addresses both
trust in the bound together identity of the address and trust in the binding
of labels to it.
Our logic allows reasoning about delegation of trust with respect to
specified labels, relationships between labels that provide more or
less specific information, and the interaction between these two
aspects.

In addition to soundness, we prove that if a principal trusts a
particular identity (possibly with label), then either this trust is
initially assumed, or there is a trust chain of delegations to this
from initial trust assumptions. We also present an algorithm that
effectively derives all possible trust statements from the set of
initial trust assumptions and show it to be sound, complete, and
terminating.\distfootnote{This is a full version, including proofs, of a paper to appear in the \emph{Proceedings of the 37$^\mathrm{th}$ IEEE Computer Security Foundations Symposium} (CSF'24).\\
Distribution Statement A.  Approved for public release: distribution is unlimited.}

\end{abstract}

\section{Introduction}
\label{sec:introduction}

We introduce a logic for reasoning about contextual trust for web
addresses, both for properties essential to the identity of the
address and for descriptive properties.  These properties, or labels,
say things such as: that an address is
for a news organization; that it belongs to the U.S. Government, Microsoft, or other specified entity; that it relates to a topic like cooking; etc.  Importantly, these are descriptive and not for authorization.  Trust in web addresses is currently established via
domain name registration, TLS certificates, and other hierarchical
components of the internet infrastructure. Formal reasoning about
trust in TLS certificates (or predecessors of them) has long been
researched~\cite{gaarder1991applying} as has reasoning about
alternatives to hierarchies like X.509 via local naming such as SPKI
or SDSI~\cite{abadi1998sdsi}. In our logic, names (addresses) are
always global so as to remain entirely compatible with the predominant
existing internet naming and authentication infrastructure.  It is the
descriptive properties of them that can be local or contextual.  But
we also reason about contextual roots of trust complementary to the
existing web authentication infrastructure both for global addresses
and for descriptive properties of them.

Traditional domain names like \texttt{\small microsoft.com} are meaningful:
users understand that connecting to \texttt{\small https://www.microsoft.com}
will take them to a website offering information about products and
services offered by Microsoft. They are also meaningful to
the internet's infrastructure: \texttt{\small microsoft.com} is
meaningful to DNS, which is generally able to provide an IP address
for it.  Further, there is structural meaning to domain names understood
by both humans and infrastructure systems: \texttt{support.microsoft.com}
is understood to be a subdomain of \texttt{\small microsoft.com}, a subdomain
at which users might expect to find support information for Microsoft
products and services.

To protect against hijack and other attacks, TLS certificates signed
by Certificate Authorities (CAs) are understood by web browsers.
These assert that the key authenticating a connection to
\texttt{\small https://www.microsoft.com} is indeed properly bound to that
domain. In contrast, an \emph{onion address} in Tor needs no certificate to assert a
binding between the authenticating key and the connection's
address. It is \emph{self-authenticating}: the onion address is a
56-character string encoding the public key for the
private key used to authenticate the
connection~\cite{ng-onion-services-224}. However, onion addresses are not
meaningful. For humans they appear to be long random strings
of letters and numbers; the RFC that recognizes \texttt{\small .onion} as
a reserved TLD also prohibits DNS resolution of onion
addresses~\cite{ietf-onion-tld-rfc}; and most popular browsers are not
set up to use the onion-service protocols needed for address
resolution or reaching an onion address or, more generally, for
connections via the Tor network.

Self-authenticating traditional addresses (SATAs) were introduced to
combine the meaningfulness of traditional web addresses to both humans
and the common infrastructure (e.g., DNS and popular browsers) with
the self-authentication of onion
addresses~\cite{once-and-future,secdev19,satas-wpes21}. A SATA for the
Tor Project would combine its traditional domain name,
\texttt{\small www.torproject.org}, with its onion address, thus for example:\\
\texttt{\small www.torproject.org/?onion=2gzyxa5ihm7nsggfxnu52 rck2vv4rvmdlkiu3zzui5du4xyclen53wid}.

SATAs do not by themselves bring self-authentication to meaningful
domain names. Seeing the above SATA in her browser, how could Alice
know whether or not an attacker with a hijacked certificate for
\texttt{\small www.torproject.org}, had associated its own onion
address versus one actually belonging to the Tor Project? Prior work
that we will briefly review presently sets out the technical means for
binding the parts of a SATA\@. Part of the work of this paper is to
provide a logic for reasoning about how Alice can ground trusting that
such a SATA is bound together in trust credentials that are based on
meaningful associations rather than just structural authority.

The above SATA format has usable security and deployment
advantages~\cite{satas-wpes21}, but the first introduced format was
for the onion address to be a subdomain of the traditional domain
name~\cite{once-and-future}. This allowed the onion address to be
bound in a Domain Validation (DV) certificate by a CA using then
existing industry standards for certificate issuance. And that format
is still used in this way, further making onion addresses meaningful
to the internet infrastructure, in this case the infrastructure for
TLS certificate issuance. The certificate indicates that the CA has
done a standard check that someone controlling the traditional domain
name intends the binding of the onion address to that domain
name. Since onion addresses are self-authenticating, a SATA owner can
attest that someone controlling the onion address also intends this
binding by signing an HTTP header also containing both addresses,
along with a hash of the TLS certificate, a validity interval,
etc. This is called a \emph{self-sattestation}.

If a full SATA is known, such as by being recalled from a browser
bookmark, then this might be sufficient. But if address discovery is
via the traditional domain name, such as from user memory or via a
search engine lookup of, e.g., ``microsoft'' or ``New York Times'',
then it is not.  Even with self-sattestation, SATA authentication as
described so far would only be as strong as the structural guarantees
of TLS certificate issuance: an attacker able to obtain a fraudulent
TLS certificate for a domain could assert binding of an onion address
under their control within that certificate.  Since onion addresses
are not easily recognized, this could pass undetected. And TLS
certificates are subject to significant hijack
risk~\cite{tls-hijack-ars} despite substantial efforts to reduce that
risk (e.g., via DNSSEC or Extended Validation (EV) certificates) or
to detect misissuance (e.g., via Certificate Transparency (CT) logs).

For this reason, one SATA can provide a \emph{sattestation} for
another.  An example suggested in~\cite{secdev19} would be for a SATA
controlled by the Freedom of the Press Foundation or the Berkman Klein
Center for Internet \& Society to provide a sattestatation for a SATA
for CNN that the components comprising CNN's SATA are properly bound
together. As introduced in~\cite{satas-wpes21}, sattestations are
implemented as JSON formatted headers signed using the onion key of
the sattestor SATA and may contain descriptive labels, such as that
the sattestee is a ``news'' site. Worked examples of sattestations and
showing such JSON headers are provided in
Appendix~\ref{sec:appendix-worked-example}.

Prior work described implementation and formats of sattestation
but not analysis or formalization for reasoning,
neither about contextual trust for sattestors asserting the binding of
a SATA's identity nor for the structure of related descriptive
labels in sattestations. These are set out in
Section~\ref{sec:normal-modal-logic}.

\subsection{Contributions}
\label{sec:contributions}

In this paper we

\begin{itemize}

\item introduce concepts of identity and contextual properties for
  web addresses and an associated concept of trusted belief in the
  binding together of such identities with or without associated
  properties. [Section~\ref{sec:identities}]
  
\item introduce a formal language and axiomatic logic for reasoning
  about trust in attestations of identity (i.e. binding of a public
  key in a self-authenticating address with a traditional domain name,
  e.g., as resolvable via DNS), a logic that supports two specific
  types of trust delegation, also introduce a Kripke
  semantics for the language and show soundness of the logic with
  respect to that semantics. [Section~\ref{sec:normal-modal-logic}]

\item contrast and compare the introduced logic with logics of
  authentication, logics of local names, and authorization and access control
  logics. [Section~\ref{sec:our-logic-versus-others}]

\item prove that any statement a principal trusts is either initially
  trusted, or that there is a trust chain of delegations to this from
  initial trust assumptions. [Section~\ref{sec:ax-proof}]

\item present a saturation algorithm that effectively derives all
  possible trust statements from a finite set of initial trust
  assumptions and show it to be sound, complete, and terminating.
  [Section~\ref{sec:saturation}]

\end{itemize}

\subsection{Related Work}

Basic concepts and implementations of SATAs and sattestations as
described in this paper are set out in papers by Syverson together
with various coauthors~\cite{once-and-future,secdev19,satas-wpes21}.
\cite{secdev19} introduces a concept of contextual trust illustrated
in the ``news'' example cited above, as well as an implementation in
the form of a list of sattestations a site can post that a browser
WebExtension can select to
trust. Appendix~\ref{sec:appendix-worked-example} presents examples
following the JSON format for sattestation headers 
introduced in~\cite{satas-wpes21}, which include contextual labels.
\cite{satas-wpes21} also mentions in one sentence the possibility of
transitivity if a sattestor label is permitted, but does not say anything
more about these.  None of these papers describes any means, formal or
otherwise, to reason about when a sattestation should be trusted. Nor
do they mention any structure to labels and how these relate to each
other (e.g., as introduced below in Section~\ref{ssec:labels}). Further
they do not provide a language for, or any conception of, transitive
or iterated trust in any form.

The basic meaning we attach to ``trust'' is the same as given by Coker
et al.~\cite{coker-ijis11}: ``Principal $B$ trusts principal $A$ with
regard to the statement $\varphi$ if and only if, from the fact that
$A$ has said $\varphi$, $B$ infers that $\varphi$ was true at a given
time.'' \cite{coker-ijis11}~is about formalizing remote attestation
for trusted hardware as in the Trusted Platform Modules of the Trusted
Computing Group (TCG).  While our basic notion of trust is in this sense
the same, that paper also does not discuss transitivity or iteration
of trust or anything like our concept of contextual trust.

Formalizing transitive trust has previously been described in the
context of secure communication~\cite{ok-ok}. However, that paper is
primarily focused on honest messages (known to be true by the sender
when sent) and related concepts that help ground a theory of
cooperative communication in distributed systems. We do not assume
messages are honest, and most of our rules effectively set out
conditions under which it is possible to make a valid trust inference
from a message. There is also nothing analogous to our notion of
contextual trust therein. We further discuss transitive and iterated
trust in Section~\ref{sec:motivation-for-decisions}.

Ours is a logic of identity, particularly a logic of identity for web
addresses. We will present a detailed comparison to other logics in
Section~\ref{sec:our-logic-versus-others}, after we have set out our
concepts of identity and trust in Section~\ref{sec:identities}, and we
have set out our logic and language in
Section~\ref{sec:normal-modal-logic}.


\section{Identities, Beliefs, and Trust}
\label{sec:identities}

\subsection{No entity without identity}

An \emph{entity} comprises an \emph{identity} and \emph{properties}.
The identity is signified by a domain name $\domain{}$ and an onion
address $\saname{}$.  In our notation an arbitrary $\sata{}$ is only
a potential entity, a \maybe{}. It may or may not signify a real
identity.
If $\domain{}$ and $\saname{}$ are bound together, then $\sata{}$ is a real identity.\footnote{By analogy to Thomistic ontology~\cite{being-and-essence}, we would say $\sata{}$ has \emph{esse}.} In this case we write $\langle D,O, \bound \rangle$ or,
more succinctly, $\bsata{}$.

Though it is often to be expected that any $\domain{}$ is bound with
at most one $\saname{}$ and vice versa, this need not be the case.
Amongst other things, this allows multiple domain names to be bound to
a single onion address (roughly as multiple domains might be hosted
at the same IP address) and allows multiple onion addresses bound
to a single domain name (as might be useful during an updating
of a domain's associated onion address).

Properties are expressed using labels, including the $\bound$ label just noted.\footnote{The only \emph{essential} property an
identity has is $\bound$. All other properties are
\emph{accidental}.} $\bsinglesata{}$ signifies that $\domain{}$ and $\saname{}$ are bound and that this entity has property $\ell$.
Having a property
may entail having a less specific property.  For example, a site with the 
property of being a U.S. Department of Justice (DOJ) site also has the more generic 
property of being a U.S. Government (USG) site.  A site for French recipes could have
separate properties pertaining to French things and pertaining to recipes.  It 
could also have a single ``French recipes'' property; this could entail a more generic 
``French'' property, or it could entail a more generic ``recipes'' property.\footnote{As noted when we formalize our treatment of labels in Sec.~\ref{ssec:labels}, we require that, if two properties are more generic than a third property, then one of the first two properties is more generic than the other.}  

Sattestations never assert a set of properties of which
some are claimed to be bound and others are not, and sattestations never assert multiple non-$\bound$ properties.  Thus, we do not need to worry about capturing a scenario in which $\bsata{}$ is known to have $\ell\neq\bound$ and we are reasoning about whether or not it has a different property $\ell'\neq\bound$ as well.
Also, as mentioned
above in Section~\ref{sec:introduction}, there are currently two
implemented SAT address formats in use~\cite{satas-wpes21}. The SATA
notation we have introduced in this section and that we use in our
logic is an abstraction that covers both of them.

\subsection{Belief, Knowledge, and Trust}

Suppose $P$ is a generic principal---either a client or a SATA---that forms 
beliefs about entities.  These beliefs are expressed in our logic using the \trusts\ operator, e.g., as $P\trusts\bsata{}$.  Our logic does not cope with incorrect beliefs; if $P\trusts\bsata{}$ holds, then $\domain{}$ is actually bound to $\saname{}$.  On the
other hand, we have no need of anything comparable to a knowledge (T)
axiom, because we never express $\bsata{}$ except when indicating $P\trusts \bsata{}$ or 
that $P$ asserts $\bsata{}$ (or $\blsata{}$, for some label $\ell$).

Besides bindings, the other thing a principal, $P$, may trust
is that a SATA has asserted something about the binding of an identity
(either with just the label $\bound$ or possibly an additional
label). We capture such assertions using the $\says$ operator, so that we have, e.g., 
$P \trusts (\upair{D}{O} \says \tpair{D'}{O'})$ (sometimes omitting the outer parentheses).  $P$'s 
trust in an assertion may follow from $P$ receiving or retrieving a 
sattestation header, as described in~\cite{satas-wpes21}, or it may follow from $P$ 
retrieving a list of entities claimed to be bound from the
appropriate page of a SATA-site, as described in~\cite{secdev19}.  As formalized below, 
the left-hand side of the $\says$ operator is always written as a \maybe{}, even if the 
name and address are known to be bound.

We will formally set out our language and logic in
the next section. 
For purposes of
comparison with other logics we note that all atomic formulae of our
language are in the following forms (and generalizations of them to be
explained when introduced below).

\begin{description}
  \item[relations between property labels:] $\ell \wknarrows \ell'$
      \item[trust in a binding:] $P\trusts \blsata{}$
      \item[trust that others asserted a binding:] $P\trusts
        (\upair{D'}{O'} \says \ltriple{D}{O}{\ell} )$
\end{description}

Note that it could be practical to support reasoning about arbitrary
principals asserting bindings. For example, a client on a
company-provided computer might download a list of labeled SATAs from
a file on a trusted connection to a company server without a SATA, or
Alice might be sent one or more labeled SATAs in an authenticated
email message from Bob. For simplicity, we will restrict the reasoning
about assertions in our current language and logic to reasoning about
assertions made by SATAs. 
We will simply make appropriate local trust assumptions about the
binding and labels of SATAs if based on non-SATA sources.  
Further, we assume that SATAs only make sattestations they
believe. Thus, if a SATA issues a sattestation, then what is asserted
will be reflected in the local trust assumptions of that SATA\@.

Note that this means trust about assertions are always in formulae
such as the above $P\trusts (\upair{D'}{O'} \says
\ltriple{D}{O}{\ell})$, and we never write $P\trusts (\tpair{D'}{O'}
\says \ltriple{D}{O}{\ell})$. Even if $P$ trusts that $D$ and $O$
are bound; the dependence on this in the axioms of our logic is
reflected in other antecedent clauses. The only trust a binding
assertion indicates is which \maybe{} asserted what binding.

The axioms of our logic reflect circumstances under which
it is valid to infer trust in a binding.  Our logic does not incorporate time or an entity gaining or losing a property (as might happen, even with $\bound$, when a domain is sold, a key rotated, etc.).  We assume antecedents of a derivation hold when the derivation is made.  If an antecedent ceases to be valid, reasoning can be restarted without that statement.  Our approach might be extended, e.g., by considering time intervals over which antecedents are valid (and, from those, computing a time interval over which a derivation's conclusion is valid).

Although we should expect in practice that a client would only accept
a SATA binding if it has checked an appropriately associated
(non-expired, non-revoked, CT logged, etc.) TLS Certificate, we do not
here represent such checks or associated reasoning. More broadly, our
purpose in this paper is not to set out a usable tool for formal
analysis of sattestation systems in practice, though we believe one
could be based on our logic.  Rather, our goal is to lay the basis for
a rigorous understanding of sattestation.

\subsection{Contextual Trust Model}

The central aspect of SATAs and sattestations that distinguish them
from other approaches for authentication (X.509, PGP) or trusted
naming (DNS, SPKI/SDSI) is that they are contextual rather than
structural. Note that even if PGP supports locally based trust, it is
still a purely structural determination. Trust derives from distance
and/or numbers of independent paths to primarily trusted nodes in a
web rather than X.509's adequate delegation from hierarchical roots,
but this is not based on any context (e.g., association with a
particular profession, company, or government agency); thus, it is still a
structural determination. Contexts are also not types, and we cannot
directly represent contextual trust in a standard type system.  The
same entity can have multiple contextual properties, and, as we shall
see, this makes a difference as to what can be inferred about
them. But entities might not have contextual labels at all other than
the generic \bound{} label (so, roughly, are not necessarily typed).

Also, while X.509 and DNS work primarily
at large scales, and local names and PGP work primarily at much
smaller scales, sattestations scale both up and down.  They might
operate at a large scale, such as the USG example above, or similarly
for large corporations, or organizations, in which case one would
expect large classes of browsers to come preconfigured by default to
trust those as sattestors. Besides making these widely available to
consumers, requiring such on enterprise issued or configured equipment
can also protect corporate or governmental internal connections by
remote workers against phishing attacks, VPN certificate hijack, and
other attacks that could leak sensitive information, such
as~\cite{safe-mil-warning}.

Sattestations will also operate fine, however, on a level of small
organizations, municipalities and groups of various kinds, possibly
without the resources or manpower to perform ongoing checks of CT log
monitors or other indicators of certificate problems or hijack. (And
those only provide post hoc evidence in any case.) Perhaps especially
at a small scale, a basis for contextual trust is in associations and
trust structures that already or naturally exist within the culture of
an organization. It is generally a higher bar to set oneself up to
fool the organizers and participants in a community that one is a
member of that community than to fool others with no vested interest
in that community per se.

Sattestations also do not require organizations to take
on the overhead and responsibility of becoming a CA, nor do those need
the active support of such structural entities, even if they utilize
and leverage CAs, CT logs, etc., for the trust provided. CAs might
also offer sattestation services, and this can provide some additional
assurance and attack resistance. But to the extent these services are
based on purely structural criteria for issuance of sattestations,
they will be bound for most domains by the trust that a DV (Domain
Validation) level check of identity can provide, even if they do build
security (in the form of public keys) into the web itself.

 \subsection{Trust in Implementation} \label{sec:trust-in-implementation}

Though our focus is on the logic of SATAs and sattestation, we briefly
describe their implementation and use in practice, especially in
contrast to TLS certificates. SATAs and sattestations have had
research implementation and deployment on both the client and server
sides~\cite{satdomains-traudt,satdomains-finkel}, which have also been
described in research
publications~\cite{secdev19,satas-wpes21}. On the client side 
the implementation uses a WebExtension for Firefox or Tor Browser that
checks headers such as shown in Sec.~\ref{sec:appendix-worked-example}, 
although
the cited publications note that the longer-term desirability is
implementation natively in
browsers.  In the Firefox deployment,
clients receive the strengthened authentication and hijack resistance
properties described herein. In The Tor Browser implementation,
clients also receive the routing and lookup protections that come from
routing over the Tor network and using onion service protocols (thus,
e.g., lookup via the onion-service Distributed Hash Table rather than via DNS).
Unlike Tor's onion addresses, SATAs are fully
backwards compatible: browsers knowing nothing of SATAs will offer
none of these additional protections. But access using them will
succeed as normal (assuming no attacks).

Because SATAs are represented in TLS certificates using subdomains of
registered domains, site owners can straightforwardly obtain needed DV
certificates for free, e.g., from Let's Encrypt. Because SATAs are
generally formatted in one of the two formats described in
\S~\ref{sec:introduction}, no reconfiguring
is necessary to make a site SATA-compatible. Besides the TLS
certificate just noted, a front end is all that is needed to
provide the sattestation headers.


\section{Formal Language, Logic, and Semantics}
\label{sec:normal-modal-logic}

\subsection{Labels}\label{ssec:labels}

We start with the labels used in our logic.  We assume that we have a ``basic'' label set $\allsflabels$ whose elements are free of the
distinguished operators \satt{}\ and \sattdel{}. (These
operators will be explained below.)  We expand this to a set
$\alllabels$ of ``general'' labels by single applications of either
\sattestor{} or \sattestordel{}:
\[
\alllabels = \allsflabels \cup \bigcup_{\ell\in\allsflabels}
\{\satt{\ell}\} \cup \bigcup_{\ell\in\allsflabels} \{\sattdel{\ell}\}
\]
In presenting rules below, we assume that labels $\ell$, $\ell'$,
$\ell_i$ are elements of $\allsflabels$, and we will use, \eg,
$\sflabelset{}'$ and $\sflabelset{i}$ for subsets of $\allsflabels$.
We assume labels $g$, $g'$, and $g_i$ are elements of $\alllabels$,
and we will use, \eg, $\anylabelset{}'$ and $\anylabelset{i}$ for
subsets of $\alllabels$.

The operators \satt{}\ and \sattdel{}, which take as a single argument $\ell
\in\allsflabels$, always occur as labels for bound SATAs.  E.g.,
$\ltriple{D}{O}{\satt{\ell}}$ indicates that \sata{} may assert the
binding of $\ell$ to any SATA, while $\ltriple{D}{O}{\sattdel{\ell}}$
indicates that \sata{} may indefinitely delegate the binding of label
$\ell$. In other words \sata{} may assert for any $(D',O')$
that $\ltriple{D'}{O'}{\ell}$ or that $\ltriple{D'}{O'}{\satt{\ell}}$ or that
$\ltriple{D'}{O'}{\sattdel{\ell}}$. These notions are made more
precise by our axioms and truth conditions below and are further glossed
after those are presented.

We assume that there is a partial order $\narrows$ on $\allsflabels$; if $\ell_1 \narrows \ell_2$, then we say that $\ell_1$ ``narrows'' $\ell_2$ and that $\ell_2$ ``broadens'' $\ell_1$.  We use $\wknarrows$ to allow for $\ell_1 = \ell_2$, and we use $\broadens$ and $\wkbroadens$ for (weak) broadening.  $\allsflabels$ includes the distinguished label $\bound{}$, which we discuss further below and which is incomparable to all other labels in $\allsflabels$.  We make the following additional assumptions about $\narrows$:\footnote{While the first two assumptions are stated in terms of what principals ``can determine,'' these are assumptions about information conveyed by labels that a principal may not have seen before rather than principals' computational power.  These assumptions are satisfied by the label system we use here and the canonical label representation described in the following paragraphs.}
\begin{itemize}
    \item For every $\ell\in\allsflabels$, the set $\uparrow\{\ell\} = \{\ell' | \ell' \wkbroadens \ell\}$ is finite.  Furthermore, given $\ell$, any principal can determine the elements of $\uparrow\{\ell\}$.
    \item Given $\ell_1$ and $\ell_2$, any principal can determine whether $\ell_1 \narrows \ell_2$, $\ell_1 \broadens \ell_2$, $\ell_1 = \ell_2$, or $\ell_1$ and $\ell_2$ are incomparable under $\narrows$.
    \item For every $\ell$, $\uparrow\{\ell\}$ is linearly (totally) ordered under $\narrows$.
\end{itemize}

Equivalent to the third condition above is that $\narrows$ induces the structure of an ordered, rooted forest on $\allsflabels$.  The root of each tree in the forest is the $\narrows$-maximum element of that tree.
We observe that
\[
\ell_1 \wkbroadens \ell_2 \text{ iff } \uparrow\{\ell_1\} \subseteq \uparrow\{\ell_2\},
\]
with equality holding on one side iff it holds on the other.

If every $\ell\in\allsflabels$ can be represented by a distinct finite string, then, given our first and third assumptions on the label poset, we may without loss of generality represent each $\ell\in\allsflabels$ as a finite string $\lambda(\ell)$, where these strings satisfy the property that $\ell_1 \wkbroadens \ell_2$ if, and only if, $\lambda(\ell_1)$, concatenated with a distinguished symbol, is a prefix of $\lambda(\ell_2)$.

To see the previous claim: Let `$\mathsf{:}$' be a distinguished symbol, and (re-encoding if necessary) let $\sigma(\ell)$ be a finite, `$\mathsf{:}$'-free string representing the label $\ell$ (and no $\ell'\neq \ell$).  Given $\ell$, we define its string label $\lambda(\ell)$ as follows.  Let $\widehat{\ell}$ be the maximum element in $\uparrow\{\ell\}$.  Define $\lambda(\widehat{\ell})$ as $\sigma(\ell)$.  Inductively, let $\ell'$ be the largest element of $\uparrow\{\ell\}$ for which $\lambda(\ell')$ has not yet been defined.  Let $\ell''$ be the label covering $\ell'$.\footnote{$x$ covers $y$ iff $x\broadens y$ and $x\wkbroadens z\wkbroadens y$ implies $z = x$ or $z = y$.}  We define $\lambda(\ell')$ as the concatenation (indicated by the $\widehat{\ }$ operator) $\lambda(\ell'')\ \widehat{\ }\ \text{`}\mathsf{:}\text{'}\ \widehat{\ }\ \sigma(\ell')$.

Because $\lambda(\ell')$ is defined inductively in terms of the string representations of the elements of the finite, linearly ordered set $\uparrow\{\ell'\}$, it will not be defined in conflicting ways in the process of defining the $\lambda$ labels of multiple elements of $\allsflabels$.

By our inductive construction, in this canonical representation of labels, $\lambda(\ell_1)\ \widehat{\ }\ \text{`}\mathsf{:}\text{'}$ is a prefix of $\lambda(\ell_2)$ if, and only if, $\ell_1\wkbroadens \ell_2$.  Furthermore, if $s = \sigma(\ell)$ and $\lambda$ is any canonical label representation, then $\text{`}\mathsf{:}\text{'}\ \widehat{\ }s\ \widehat{\ }\ \text{`}\mathsf{:}\text{'}$ appears at most once as a substring of $\text{`}\mathsf{:}\text{'}\ \widehat{\ }\lambda\ \widehat{\ }\ \text{`}\mathsf{:}\text{'}$.

Our examples of labels will all have the form of canonical $\lambda$-labels as just described and as specified in the following grammar.
\begin{eqnarray*}
\sigma &=& \text{`}\mathsf{:}\text{'-free string that is not `}\mathsf{bound}\text{'}\\
\lambda &=& \text{`}\mathsf{bound}\text{'}\ |\ \sigma\ |\ \lambda\ \widehat{\ }\ \text{`}\mathsf{:}\text{'}\ \widehat{\ }\ \sigma\\
& &\ \text{ (where }\text{`}\mathsf{:}\text{'}\ \widehat{\ }\sigma\ \widehat{\ }\ \text{`}\mathsf{:}\text{'}\text{ is not a substring of }\text{`}\mathsf{:}\text{'}\ \widehat{\ }\lambda\ \widehat{\ }\ \text{`}\mathsf{:}\text{')}
\end{eqnarray*}
If we use two labels $\lambda_1, \lambda_2$, then we have $\lambda_1\broadens\lambda_2$ if, and only if, $\lambda_1\ \widehat{\ }\ \text{`}\mathsf{:}\text{'}$ is a prefix of $\lambda_2$.

\begin{example}
Assume `$\mathsf{USG}$', `$\mathsf{DOJ}$', and `$\mathsf{FBI}$' are $\sigma$-strings 
corresponding to the U.S. Government, Department of Justice, and the Federal Bureau of 
Investigation, respectively.  Then:
\begin{eqnarray*}
\mathsf{USG} &\broadens& \mathsf{USG:DOJ:FBI},\\
\mathsf{recipes} &\broadens& \mathsf{recipes:French},\\
\mathsf{USG:DOJ:FBI} &\broadens& \mathsf{USG:DOJ:FBI:recipes:French}
\end{eqnarray*}
but the label `$\mathsf{USG:DOJ:FBI:recipes}$' and the label
`$\mathsf{recipes:French}$' are incomparable. 
\end{example}
Because we are using the canonical label format described above, no string may appear multiple times as a $\sigma$-string in a label; e.g., 
`$\mathsf{recipes:USG:DOJ:FBI:recipes}$' is not permitted as one of our example labels.

\subsection{Grammar, axioms, and rules}

We now present a grammar (Sec.~\ref{sssec:grammar}) for terms that may appear in our logic, our logical axioms and local axiom schemata  (Sec.~\ref{sssec:axioms}), and the deduction rules (Sec.~\ref{sssec:rules}) that we use.  The local axiom schemata describe the forms of axioms that a principal may optionally use; they capture trust decisions specific to a principal and not inherent in our logic.  We provide further discussion of the axioms and local axiom schemata in Sec.~\ref{sec:motivation-for-decisions}.

\subsubsection{Grammar}\label{sssec:grammar}
We start with a grammar for the terms that may appear in our logic.
\begin{eqnarray*}
\mathsf{maybe} &:=& \upair{\mathsf{domain}}{\mathsf{onion}};\\
\mathsf{princ} &:=& P\in\mathcal{P}\ |\ \tpair{\mathsf{domain}}{\mathsf{onion}} |\ \mathsf{maybe};\\
\mathsf{bd\_label} &:=& ``\mathsf{bound}";\\
\mathsf{bd\_pair} &:=& \tpair{\mathsf{domain}}{\mathsf{onion}}\ |\ \ltriple{\mathsf{domain}}{\mathsf{onion}}{\mathsf{bd\_label}};\\
\mathsf{sf\_label} &:=& \ell \in \allsflabels;\\
\mathsf{gen\_label} &:=& g \in \alllabels;\\
\mathsf{bd\_l\_pair} &:=& 
\mathsf{bd\_pair}\ |\ \ltriple{\mathsf{domain}}{\mathsf{onion}}{\mathsf{gen\_label}};\\
\mathsf{s\_stmt} &:=& \mathsf{maybe} \says \mathsf{bd\_l\_pair};\\
\mathsf{t\_stmt} &:=& \mathsf{princ} \trusts \mathsf{s\_stmt}\ |\ \mathsf{princ} \trusts \mathsf{bd\_l\_pair};\\
\mathsf{wk\_order} &:=& \mathsf{sf\_label} \wknarrows \mathsf{sf\_label};\\
\mathsf{stmt} &:=& \mathsf{bd\_pair}\ |\ \mathsf{s\_stmt}\ |\ \mathsf{t\_stmt}\ |\ \mathsf{wk\_order};\\
\mathsf{form} &:=& \mathsf{stmt}\ |\ \mathsf{form} \mcond \mathsf{form} \ |\ \mathsf{form} \land \mathsf{form};
\end{eqnarray*}

The definition of $\mathsf{s\_stmt}$s captures that, as noted above, only 
$\mathsf{maybe}$s make assertions via $\says$ statements.

\subsubsection{Axioms and local axiom schemata}\label{sssec:axioms}

We start with our axioms.

\begin{spacing}{.7}

\begin{enumerate}[label=A\arabic*]

\item\label{ax:1wt}
\begin{equation*}
\begin{split}
    &\Bigl(\ell_1 \wkbroadens \ell_2\ \land\\
    &\bigl(P\trusts \ltriple{D}{O}{\ell_2}\bigr)\Bigr)\\
    &\mcond\ P\trusts \ltriple{D}{O}{\ell_1}
\end{split}
\end{equation*}

\item\label{ax:2wt}
\begin{equation*}
\begin{split}
    &\Bigl(\ell_1 \wkbroadens \ell_2 \wkbroadens \ell_3\ \land\\
    &P\trusts \ltriple{D_1}{O_1}{\satt{\ell_1}}\ \land \\
    &\bigl(P\trusts \upair{D_1}{O_1} \says \ltriple{D_2}{O_2}{\ell_2}\bigr) \land\\
    &\bigl(P\trusts \upair{D_2}{O_2} \says \ltriple{D_2}{O_2}{\ell_3}\bigr)\Bigr)\\
    &\mcond P\trusts \ltriple{D_2}{O_2}{\ell_2}
\end{split}
\end{equation*}

\item\label{ax:3wt}
\begin{equation*}
\begin{split}
    &\Bigl(\ell_1 \wkbroadens \ell_2 \wkbroadens \ell_3\ \land\\
    &P\trusts \ltriple{D_1}{O_1}{\sattdel{\ell_1}}\ \land \\
    &\bigl(P\trusts \upair{D_1}{O_1} \says \ltriple{D_2}{O_2}{\sattdel{\ell_2}}\bigr)\ \land\\
    &\bigl(P\trusts \upair{D_2}{O_2} \says \ltriple{D_3}{O_3}{\satt{\ell_3}}\bigr)\Bigr)\\
    &\mcond P\trusts \ltriple{D_2}{O_2}{\sattdel{\ell_3}}
\end{split}
\end{equation*}

\item\label{ax:4wt}
\begin{equation*}
\begin{split}
    &\Bigl(\ell_1 \wkbroadens \ell_2 \wkbroadens \ell_3\ \land\\
    &P\trusts \ltriple{D_1}{O_1}{\sattdel{\ell_1}} \land \\
    &\bigl(P\trusts \upair{D_1}{O_1} \says \ltriple{D_2}{O_2}{\satt{\ell_2}}\bigr) \land\\
    &\bigl(P\trusts \upair{D_2}{O_2} \says \ltriple{D_3}{O_3}{\ell_3}\bigr)\Bigr)\\
    &\mcond P\trusts \ltriple{D_2}{O_2}{\satt{\ell_3}}
\end{split}
\end{equation*}

\item\label{ax:5wt}
\begin{equation*}
\begin{split}
    &P\trusts \ltriple{D}{O}{g} \\ 
    &\mcond \ P\trusts \ltriple{D}{O}{\bound}
\end{split}
\end{equation*}

\item\label{ax:6wt}
\begin{equation*}
\begin{split}
    &\bigl(P\trusts \upair{D}{O}\says \varphi\bigr)\\
    &\mcond \bigl(P\trusts \upair{D}{O}\says\ltriple{D}{O}{\bound}\bigr)
\end{split}
\end{equation*}

\item\label{ax:7wt} 
\begin{equation*}
\begin{split}
    &P\trusts \ltriple{D}{O}{\sattdel{\ell}}\\
    &\mcond P\trusts \ltriple{D}{O}{\satt{\ell}}
\end{split}
\end{equation*}

\item\label{ax:8wt} 
\begin{equation*}
\begin{split}
    &P\trusts \upair{D_1}{O_1}\says \ltriple{D_2}{O_2}{\sattdel{\ell}}\\
    &\mcond P\trusts \upair{D_1}{O_1}\says \ltriple{D_2}{O_2}{\satt{\ell}}
\end{split}
\end{equation*}

\item\label{ax:9wt} 
\begin{equation*}
\begin{split}
    &\ell_1 \wkbroadens \ell_2\ \land\\
    &\bigl(P\trusts \ltriple{D}{O}{\satt{\ell_1}}\bigr)\\
    &\mcond\ \bigl(P\trusts \ltriple{D}{O}{\satt{\ell_2}}\bigr))
\end{split}
\end{equation*}

\item\label{ax:10wt} 
\begin{equation*}
\begin{split}
    &\ell_1 \wkbroadens \ell_2\ \land\\
    &\bigl(P\trusts \ltriple{D}{O}{\sattdel{\ell_1}}\bigr)\\ 
    &\mcond\ \bigl(P\trusts \ltriple{D}{O}{\sattdel{\ell_2}}\bigr)
\end{split}
\end{equation*}

\item\label{ax:11wt}
\begin{equation*}
\begin{split}
    &\upair{D}{O}\says \varphi\\
    &\mcond \bigl(P \trusts \upair{D}{O}\says \varphi\bigr)
\end{split}
\end{equation*}

\item\label{ax:12wt}
    \begin{enumerate}
    \item\label{ax:12awt}
       \begin{equation*}
        (\varphi \land \psi) \mcond \varphi
      \end{equation*}
    \item\label{ax:12bwt}
      \begin{equation*}
        (\varphi \land \psi) \mcond (\psi \land \varphi)
      \end{equation*}
    \item\label{ax:12cwt}
      \begin{equation*}
        (\varphi \land (\psi \land \delta)) \mcond ((\varphi \land \psi)
        \land \delta)
      \end{equation*}
    \end{enumerate}

\end{enumerate}

\end{spacing}

We also specify the form of local axiom schemata that a principal may, optionally, 
use to reflect local trust.  These are parameterized either by two $\mathsf{gen\_label}$s 
($\lasdel{}{}$) or by a $\mathsf{domain}$ and $\mathsf{onion}$ ($\lassatt{}{}$ and 
$\lassattdel{}{}$).

\begin{spacing}{.7}

\begin{multline}
P \trusts \langle D,O,g_1 \rangle  \ \mcond \ P \trusts \langle D,O,g_2 \rangle\\ \text{for given }g_1, g_2\text{ and for any }D, O.\tag{$\lasdel{g_1}{g_2}$}\label{las:del}
\end{multline}

\begin{multline}
P \trusts \langle D,O, \satt{\ell} \rangle\text{ for given }D, O\\
\text{and for any and all labels }\ell.\tag{$\lassatt{D}{O}$}\label{las:satt}
\end{multline}

\begin{multline}
P \trusts \langle D,O, \sattdel{\ell} \rangle\text{ for given }D, O\\
\text{and for any and all labels }\ell.\tag{$\lassattdel{D}{O}$}\label{las:sattdel}
\end{multline}

\end{spacing}

\subsubsection{Rules}\label{sssec:rules}

Finally, we note the logical rules that we allow.

\begin{equation}
\varphi\text{ and }\varphi \mcond \psi \ \implies \ \psi \tag{$\mathsf{MP}$}\label{rule:mp}
\end{equation}
\begin{equation}
\varphi\text{ and } \psi \ \implies \ \varphi \land \psi \tag{$\mathsf{\land I}$}\label{rule:ai}
\end{equation}

We now describe the reasoning reflected in our axioms and motivation
for these as well as for our linguistic and logical choices, after
which we will present our semantics.

\subsection{Motivation for linguistic and logical design decisions}
\label{sec:motivation-for-decisions}

This section describes the intent behind our axioms and rule schemata
as well as decisions for the linguistic and logic design choices we
have made.

The only rules are Modus Ponens ($\mathsf{MP}$) and Adjunction ($\mathsf{\land I}$, a.k.a.\ And-Introduction). Local axiom schemata
reflect local trust assumptions particular principals might hold
about nonlogical trust relations between labels or which principals to
trust as sattestor.

Axiom~\ref{ax:1wt} effectively says that any SATA that $P$ trusts to
have a particular property label is also trusted to have any broader
label, e.g., being a U.S. Government Dept.\ of Justice SATA (label $\mathsf{USG:DOJ}$) 
implies being a U.S. Government SATA (label $\mathsf{USG}$).
\ref{ax:1wt} guides other design choices and how labels should
be interpreted. It is important to understand that in our view labels do
not capture authorization.  We instead view narrowing a label as
providing \emph{more specific} information: for $\ell\broadens\ell'$,
we view $\ltriple{D}{O}{\ell'}$ as more specific than
$\ltriple{D}{O}{\ell}$.

The \bound{} label is incomparable to all other labels. If it were narrower
than one or more other labels, \ref{ax:1wt} would allow us to infer that a SATA
had those labels simply because it is bound. And if \bound{}
were broader than one or more other labels, \ref{ax:9wt} would allow a SATA that
is trusted to have $\satt{\bound{}}$ to sattest
to \ltriple{D}{O}{\ell'} for any $D$ and $O$ and any of those narrower labels $\ell'$.  
An analogous problem would also arise with~\ref{ax:10wt} and $\sattdel{\bound{}}$.

Axioms~\ref{ax:2wt}--\ref{ax:4wt} tell us a label that one can infer is
bound to a particular SATA based on a label that a trusted sattestor
has attested is bound to that SATA\@. The first
antecedent conjunct of each of these (about the relation between the
three labels occuring in the other antecedent conjuncts) allows for
the possibility that the labels in each of them
need not be identical but can be appropriate narrowings or broadenings
of the others.  These axioms also reflect (in the fourth antecedent
conjunct of each) the goal of authority independence.  As originally
stated, authority independence means that a misbehaving or misled CA
will not be trusted by itself to bind an onion address and a domain in
a TLS certificate. The owner of the domain and the onion address must
concur for this binding to be trusted by a SATA-aware
client~\cite{secdev19,satas-wpes21}. The authority independence aspect
introduced herein applies not to CAs but to sattestors: a sattestor's
assertion that a sattestee has a label will not be trusted unless the
sattestee concurs.  For Axiom~\ref{ax:2wt}, this is an explicit
assertion of a label $\ell_3$ that weakly narrows the sattested label;
for Axioms~\ref{ax:3wt} and~\ref{ax:4wt}, this is a statement in which
the sattestee acts as a sattestor in the appropriate way.  We need
three axioms to cover the different kinds of labels in the ultimate
consequent: sattestor-free, delegatable sattestor, or simple
sattestor, respectively.

For practical simplicity,
a sattestor knows the label \satlabel{i} for which it is providing a
sattestation and only states it under a role qualified to do so.  We
assume any sattestor can detect and resist an attacker attempting
to somehow confuse that sattestor about the state it is in so as to
cause it to utter a sattestation under a sattestor role different from
one appropriate for the sattestation uttered. This allows us to avoid
having to include a label \satt{\satlabel{j}} (or respectively
\sattdel{\satlabel{j}}) under which \sata{1} is making a sattestation
in the second trust conjunct of Axiom~\ref{ax:2wt}---or indeed having
a sattestor making any statement be anything beyond a $\maybe{}$.

We do not include an analogue of Axiom~\ref{ax:2wt} that replaces the
restriction $\ell_1 \wkbroadens \ell_2 \wkbroadens \ell_3$ with the
restrictions $\ell_1 \wkbroadens\ell_2$ and $\ell_3 \broadens\ell_2$,
\ie, in which the label claimed by the sattestee is (strictly) broader
than the label given by the sattestor. We now explain why.

If we derived the narrower label given by the sattestor---\ie, we
derived $P\trusts \ltriple{D_2}{O_2}{\ell_2}$---then we would be
giving the sattestee a label that is more specific than what it claims
about itself.  This contradicts the motivation for requiring the
sattestee to say a label consistent with what the sattestor says.  For
example, we would not want a sattestor to be able to give the label
$\mathsf{USG:DOJ:FBI:recipes:French}$ to a site that only claims the
label $\mathsf{USG:DOJ:FBI}$ for itself.

If we derived the broader label claimed by the sattestee---\ie, we
derived $P\trusts \ltriple{D_2}{O_2}{\ell_3}$---then the concern is a
bit more subtle.  As an example, consider labels that reflect the
structure of a hierarchical organization.  Suppose that $(D_1, O_1)$
is a directory site in the engineering division of ACME Corp. that can
label other sites as research sites within that division, \ie, it has
the property $\satt{\mathsf{ACME:Eng:research}}$.  Suppose that
$(D_2,O_2)$ is an internal server for the marketing division of ACME
Corp.\ $(D_2,O_2)$ claims that it is an ACME Corp.\ site by saying it
has the label $\mathsf{ACME}$.  If $(D_1,O_1)$ says that $(D_2,O_2)$
has the label $\mathsf{ACME:Eng:research}$, then---using the
hypothetical axiom we are considering---we could derive that
$(D_2,O_2)$ has the label $\mathsf{ACME}$.  On one hand, this seems
reasonable; if $(D_2,O_2)$ had the label
$\mathsf{ACME:Marketing:internal}$, then it would also have the label
$\mathsf{ACME}$ (by~\ref{ax:1wt}).  On the other
hand, the derivation that we are considering makes use of a
sattestation by a site $(D_1,O_1)$ that is in a different
administrative division from $(D_2,O_2)$ and whose \satt{}\ authority
does not cover the division containing $(D_2,O_2)$.  On balance, we
think this is undesirable, so we do not include such a derivation
here.  If $(D_2,O_2)$ wanted the broader label $\ell_3$---here,
$\mathsf{ACME}$---based on the sattestation from $(D_1,O_1)$, then it
needs to claim the narrower label $\ell_2$---here
$\mathsf{ACME:Eng:research}$---so that
$\ell_1\wkbroadens\ell_2\wkbroadens\ell_3$, and~\ref{ax:2wt} would
apply. 

Many notions called `trust' are transitive, though this is also
sometimes criticised~\cite{trust-not-transitive}. Following previous
work on SATAs, our notion is not inherently transitive. Just because
$P$ trusts that a $\ltriple{D}{O}{\ell}$ is bound and has property
$\ell$, does not mean that $P$ trusts sattestations by $(D,O)$.  And
even if $P$ trusts sattestations by $(D,O)$, it may not trust $(D,O)$
to sattest who else is a sattestor.  See~\cite{secdev19} for an
illustrative example.

Rather than simply prohibit transitive trust, our axioms (above) and truth
conditions (below) set out the contextual assumptions under which we support
it.

Axiom~\ref{ax:5wt} ensures that a SATA trusted to be bound with any label is
also trusted to be simply bound.

Axiom~\ref{ax:6wt} tells us that if a principal trusts that a SATA
has asserted anything, that SATA is trusted to be implicitly
asserting that it is itself a bound SATA\@.

Axiom~\ref{ax:7wt} indicates that when a principal trusts that a SATA is
a delegatable-sattestor of a label it also trusts that SATA is
simply a sattestor of that same label. Axiom~\ref{ax:8wt} reflects
that principal trust in an assertion about a delegatable-sattestor
label similarly implies trust in an assertion about a simple sattestor
of the label. Note that
\ref{ax:8wt}, combined with \ref{ax:7wt}, allows for a
simplification of the logic with respect to
Axioms~\ref{ax:2wt}--\ref{ax:4wt}.  It could have been necessary to have
two versions of each of those axioms: wherever $\satt{}$ occurs in an
antecedent clause of those axioms we could have also needed to have a
version of the axiom with $\sattdel{}$ in that clause. But
Axioms~\ref{ax:7wt} and~\ref{ax:8wt} allow us to infer trust in
a simple sattestor label (assertion about such a label) from
trust in a delegatable-sattestor label (assertion about such a label).

Axioms~\ref{ax:9wt} and~\ref{ax:10wt} describe trust that
a (delegatable) sattestor of any label is also a (delegatable)
sattestor of any narrowing of that label.

Axiom~\ref{ax:11wt} indicates that any statement is effectively a public
announcement. Every principal knows every assertion made by any
SATA\@. (Principals do not necessarily trust what is asserted; they
only trust that it is asserted.) Note that we are not concerned in
this logic with capturing reasoning about the sending of messages or
authenticated propagation of information. Rather we are concerned with
what can be inferred from assertions made by any principal together
with assumptions regarding who is trusted to assert which bindings.

Axiom~\ref{ax:12wt} comprises collectively the standard axioms for use
of the sentential connective for conjunction. We will in general be
liberal in our writing of conjunctions when the usage is clear and
consistent with propositional logic, thus allowing, e.g., `$\varphi
\land \psi \land \delta$'. (We have similarly allowed, .e.g., `$\ell_1
\wkbroadens \ell_2 \wkbroadens \ell_3$'.)

For the local axiom schemata, a principal would use an instance of \ref{las:del} to trust that any SATA bound to a label $g_1$ is also bound to the label $g_2$.  The \ref{las:satt} and \ref{las:sattdel} schemata allow a principal to trust a specified SATA to be a (delegatable) sattestor for any label.  This is useful if the principal wants to trust whatever her employer (or a trustworthy friend, etc.) says.

\subsection{Semantics}

A \emph{frame} $\mathcal{F}$ comprises a tuple $(W,\{R_{P}\})$
where $W$ is a set of worlds, and for each principal $P$ there is a relation 
$R_P \subseteq W \times W$ on worlds such that $(w,w') \in R_P$ indicates
$w'$ is accessible from $w$ by $P$.  A model $\mathcal{M}$ consists of: a frame $\mathcal{F}$, 
a client $c$, a set $\mathcal{D}$ of domain names, a set $\mathcal{O}$ of onion addresses, a label set $\allsflabels$ and a partial 
order $\narrows$ on it, the set $\Phi$ of all well-formed formulae (all $\mathsf{form}$s in our grammar above), and a (world-dependent) assignment 
function $a_w$
based on those.  Note that all names, labels (and the partial order on them), and statements are
shared globally across worlds, while local differences (between worlds) are all
reflected in the assignment function $a_w$. So, in a frame $\mathcal{F} = (W,\{R_{P}\})$, a world $w \in W$ is a
tuple $(c, \mathcal{D}, \mathcal{O}, \allsflabels, \narrows, \Phi, a_w)$.
Because we do not attempt in this logic to capture faulty trust
reasoning (where a principal can trust something that is false), our
accessibility relation on worlds $R$ is reflexive.

Ours is not a normal modal logic, primarily because we do not capture
trust of arbitrary formulae.  Given our motivating application, our focus is ultimately on 
reasoning about contextual trust (the binding of labeled SATAs) while relying on the 
types of utterances that are made in sattestations.  While extensions of our logic may be 
of future interest from a logical perspective, here we intentionally restrict our 
language to trust in SATA bindings and specified types of assertions. 
Thus, as already noted, in our current language only a
$\maybe{}$ can say anything.  Further, the only thing that can be said
is a bound SATA, which may have any type of label. And the only
formulae that a principal can trust are those expressing such bound
SATAs and those expressing utterances of bound SATAs by a $\maybe{}$
SATA\@.  To minimize notation we do not introduce separate notation
for the formulae that a $\maybe{}$ can say or for the formulae that a
principal can trust. But it should be kept in mind that these cannot
be arbitrary formulae.

To capture binding of identities we introduce a pair of functions
$\beta_1,\beta_2$ on $\mathcal{D} \times \mathcal{O}$, where
$\beta_1(\sata{}) =$ the set of all
$\sata{i} \in \mathcal{D} \times \mathcal{O}$ s.t. $\domain{i}$ is
bound with $\saname{}$, and \\
$\beta_2(\sata{}) =$ the set of all
$\sata{i} \in \mathcal{D} \times \mathcal{O}$ s.t. $\saname{i}$ is
bound with $\domain{}$. Those preliminaries out of the way, we now
set out the conditions that our assignment function must satisfy.  The specification of an assignment function $a_w$ satisfying these then determines the truth conditions.  For a world $w$ and a label $\ell$, $a_w(\ell)$ defines those $(D,O)$ for which $\ltriple{D}{O}{\ell}$ in $w$. Similarly, $a_w(\Phi_{(D,O)})$ defines the set of statements that $(D,O)$ utters in $w$.  We discuss constraints on the assignment function in Sec.~\ref{sec:motivation-for-decisions}.
\begin{enumerate}[label=AF\arabic*]
  \item $a_{w}(c) = c$, $a_{w}(\domain{} \in \mathcal{D}) =
    \domain{}$, $a_{w}(\saname{} \in \mathcal{O}) = \saname{}$
  \item $a_w(\ell\in\widehat{\Lambda}) \subseteq \mathcal{D} \times \mathcal{O}$ 
  \item $a_w(\Phi_{(D,O)}) \subseteq \{ \varphi | \varphi \text{ is a }\mathsf{bd\_l\_pair} \}$
  \item $a_w(\beta_1(\sata{})) \subseteq \beta_1(\sata{})$,\\
    $a_w(\beta_2(\sata{})) \subseteq \beta_2(\sata{})$,  
\end{enumerate}

By uttering a formula, any SATA implicitly
asserts that it is an entity. We thus restrict $a_w(\Phi_{\sata{}})$ so
that if $a_w(\Phi_{\sata{}})$ is nonempty, then
$\langle D,O, \bound \rangle \in a_w(\Phi_{\sata{}})$.  (In general, note that an element of $a_w(\Phi_{(D,O)})$ is a $\mathsf{bd\_l\_pair}$ $\pi$ and not 
an $\mathsf{s\_stmt}$ of the form $(D,O)\says \pi$.)
Also, the set of SATAs with a given label can vary from world to
world. Nonetheless, we do not permit that $\ell' \narrows \ell$ in one
world but not in another. Thus if $a_w(\ell') \subseteq a_w(\ell)$ at
any world $w$, then $a_{w'}(\ell') \subseteq a_{w'}(\ell)$ at all
worlds $w'$. We reflect this in our truth conditions, which we set
out following some preliminary remarks.

We give truth conditions to expressions in our language in terms of
satisfaction at a world $w$. (We take the model, $\mathcal{W}$ to be
set and implicit.)  We begin with the various logically atomic
formulae: first expressing the narrowing relation on labels, then
various types of formulae for SATAs bound with different types of
labels. Next we present truth conditions for formulae reflecting
assertions, then trust formulae, and finally formulae containing
logical connectives.

Our semantics is model-theoretic in structure, not
operational, but we are also motivated by an operational
perspective reflected in the truth conditions for
various individual kinds of formulae.  Together, the truth conditions allow us to evaluate $w\models \phi$, where $\phi$ is a $\mathsf{form}$ in our grammar.
Truth conditions rely on the function $a_w$ and truth conditions weakly earlier in the list; for convenience, the first line of each truth condition indicates the type of $\phi$ for which it defines the truth of $w\models\phi$.
\begin{enumerate}[label=TC\arabic*]
  \item {[$\phi$ is $\mathsf{stmt:wk\_order}$]}\\ 
  $w \models \ell \wknarrows \ell'$ iff \\
   \big($\ell'\ \widehat{}:$ is an initial string of $\ell$\big) $\land$\\
    \big( for all $w'$, $a_{w'}(\ell) \subseteq a_{w'}(\ell')$\big)
    \label{tc:wkorder}

  \item {[$\phi$ is $\mathsf{stmt:bd\_pair}$]}\\ 
  $w \models \langle D, O, \bound \rangle$ iff \\
    $w \models \langle D, O \rangle$ iff \\
    \big($a_w(D) = \mathsf{proj}_1(D',O')$ \\
    for some $(D',O') \in a_w(\beta_1(D,O))$\big) $\land$ \\
    \big($a_w(O) = \mathsf{proj}_2(D'',O'')$ \\
    for some $(D'',O'') \in a_w(\beta_2(D,O))$\big)
    \label{tc:bdpair}

  \item {[$\phi$ is $\mathsf{stmt:t\_stmt:bd\_l\_pair}$ with $g\in\allsflabels\setminus\{\bound\}$]}\\ 
  $w \models \blsata{}$ iff \\
    \big($w \models \langle D, O, \bound \rangle$\big) $\land$ \big($(D,O) \in a_w(\ell)\big)$
    \label{tc:bdlpair}

  \item 
  {[$\phi$ is $\mathsf{stmt:t\_stmt:bd\_l\_pair}$ with $g=\satt{\ell}$]}\\ 
  $w \models \langle D, O, \satt{\ell} \rangle$ iff \\
    \big($w \models \langle D, O, \bound \rangle$\big) $\land$ \\
    \Big(if \big($\langle D',O',\ell' \rangle \in a_w(\Phi_{\sata{}})$ $\land$ \\
    \ \ \ $\langle D',O',\ell'' \rangle \in a_w(\Phi_{(D',O')})$ $\land$\\
    \ \ \ $w\models \ell'' \wknarrows \ell'$ $\land$\\
     \ \ \ $w\models \ell' \wknarrows \ell$\big)\\ 
   then 
   $w\models \langle D',O',\ell' \rangle$\Big)
    \label{tc:satt}

  \item 
  {[$\phi$ is $\mathsf{stmt:t\_stmt:bd\_l\_pair}$ with $g=\sattdel{\ell}$]}\\ 
  $w \models \langle D_1, O_1, \sattdel{(\ell)} \rangle$ iff \\
    $w\models \langle D_1, O_1, \satt{(\ell)} \rangle$ $\land$\\
    \Bigg(
    if \big($w\models \ell'' \wknarrows \ell'$ $\land$ $w\models \ell' \wknarrows \ell$\big), then\\
    \bigg(\Big(if \big($\langle D_2,O_2, \sattdel{\ell'} \rangle \in a_w(\Phi_{\sata{1}})$ $\land$\\ $\langle D_3,O_3, \satt{\ell''} \rangle \in a_w(\Phi_{\sata{2}})$\big) then \\ $w\models \langle D_2,O_2, \sattdel{\ell''} \rangle$\Big) $\land$\\
     \Big(if \big($\langle D_2,O_2, \satt{\ell'} \rangle \in a_w(\Phi_{\sata{1}})$ $\land$\\
     $\langle D_3,O_3, \ell'' \rangle \in a_w(\Phi_{\sata{2}})$ \big) then\\
     $w\models \langle D_2,O_2, \satt{\ell''} \rangle$ \Big)\bigg)
    \Bigg)
    \label{tc:sattdel}

  \item {[$\phi$ is $\mathsf{stmt:s\_stmt}$, i.e., of form $\mathsf{maybe}\says\mathsf{bd\_l\_pair}$]}\\ 
  $w \models \sata{} \says \varphi$ iff \\
    \big(for all $P$, $R_P(w,w') \Rightarrow \varphi \in a_{w'}(\Phi_{\sata{}})$\big) $\land$\\
    \big(if $\varphi = \langle D', O', \sattdel{\ell} \rangle$ for some $D'$, $O'$, and $\ell$,\\ then
    $\langle D', O', \satt{\ell} \rangle \in a_{w'}(\Phi_{\sata{}})$\big)
    \label{tc:sstmt}

  \item {[$\phi$ is $\mathsf{stmt:t\_stmt}$; $\varphi$ must be $\mathsf{s\_stmt}$ or $\mathsf{bd\_l\_pair}$]}\\ 
  $w \models P \trusts \varphi$ iff \\
    $\forall w'\in W$ \big(if $R_P(w,w')$, then  $w'\models \varphi$\big)
    \label{tc:trust}

  \item {[$\phi$ is $\mathsf{form}\mcond\mathsf{form}$]}\\ $w \models \varphi \mcond \psi $ iff\\
  $w\not\models \varphi$ or $w\models\psi$.
\label{tc:mcond}
    
  \item {[$\phi$ is $\mathsf{form}\land\mathsf{form}$]}\\ $w \models \varphi \land \psi $ iff \\
  $w\models \varphi$ and $w\models \psi$.
    \label{tc:land}
\end{enumerate}

Note that it is possible for circular dependencies involving~\ref{tc:sattdel}.  Thus, e.g., the
condition $w\models \ltriple{D}{O}{\sattdel{\ell}}$ might depend
on $w \models \ltriple{D'}{O'}{\sattdel{\ell}}$ and vice versa.
For our motivating scenario in
Section~\ref{sec:appendix-worked-example}, this is not a problem: as
shown by Thm.~\ref{thm:sattdel}, a principal will not \emph{derive}
trust in a SATA of the form $\ltriple{D}{O}{\sattdel{\ell}}$ unless
this is rooted in an explicit local assumption (and not a cycle of
dependencies). We now further describe scenarios motivating 
choices for the semantics of $\sattdel{}$.

\arxivonly{
\paragraph{Repeated delegation of same label for redundancy} A (hypothetical)
motivating case for the $\sattdel{}$ operator is: the Department of
Defense (DoD) is a trusted root for saying that a server is usable by
operational forces via the label $\mathsf{DoDops}$.  These forces may
be operating in a communication-denied/-degraded environment.  If one
server $\ltriple{D_i}{O_i}{\mathsf{DoDops}}$ is being replaced or
augmented by another server $(D_j, O_j)$, it may not be possible to
get DoD to provide a sattestation for $(D_j, O_j)$ in a timely manner.
It would be useful if $(D_i, O_i)$ could do this, so that there would
be a chain of trust from DoD to
$\ltriple{D_i}{O_i}{\sattdel{\mathsf{DoDops}}}$.  $(D_i, O_i)$ could
then provide a sattestation of
$\ltriple{D_j}{O_j}{\sattdel{\mathsf{DoDops}}}$; if needed, this would
allow $(D_j, O_j)$ to provide a sattestation of
$\ltriple{D_k}{O_k}{\mathsf{DoDops}}$, \etc
} 

\paragraph{Delegating narrower authority without needing to know how it
  might be narrowed further} A (hypothetical) motivating case for the
\sattdel{}\ operator that makes use of label narrowing is: the General
Services Administration (GSA, assumed to be a trusted root for
$\mathsf{USG}$) sattests to
$\ltriple{D_1}{O_1}{\sattdel{\mathsf{USG:DOJ}}}$ for the DOJ.
\sata{1} sattests to
$\ltriple{D_2}{O_2}{\sattdel{\mathsf{USG:DOJ:FBI}}}$ for the FBI\@.
This eventually works its way down to a sattestation that effectively
asserts that the FBI's New York field office website is a
U.S. Government website.  GSA should not need to know DOJ or FBI
structure and delegations in order for this to work.

\paragraph{Delegation is permitted to ``break'' narrowing}
The the U.S. Dept.\ of Justice, Bureau of Justice Assistance maintains
on the web~\cite{doj-privacy-orgs} a list of civil liberties
organizations including the Electronic Frontier Foundation (EFF). One
could imagine using SATAs to capture (with stronger authentication)
trust in this list by using local axiom schema~\ref{las:del} to state
that if $P$ trusts \ltriple{D}{O}{\mathsf{USG:DOJ}} then $P$ trusts
\ltriple{D}{O}{\sattdel{\mathsf{civil\_liberties}}}. This could then
permit a trust that is ultimately rooted in a GSA SATA with the label
$\sattdel{\mathsf{USG}}$ to permit trust in an EFF SATA with label
$\mathsf{civil\_liberties}$ even though, $\mathsf{civil\_liberties}
\not\wknarrows \mathsf{USG}$.
Section~\ref{sec:appendix-worked-example} presents a worked example
elaborating on this and illustrating the implementation of it.

\subsection{Soundness and incompleteness}

\subsubsection{Soundness}

A \emph{derivation} is a sequence of formulae, where
each line in the derivation is either an assumption or follows
from the set of assumptions $\Gamma$ by an application of one of
the two rules to previously derived lines or instances of an axiom.
Assumptions include instances of local trust axiom schemata and
specific trust statements that a principal has preloaded, each
being either a bound SATA or an assertion by a SATA about a bound SATA.
Assumptions also include facts about narrowings and broadenings of
labels otherwise occurring in the set of assumptions.

We present new notation needed to state our soundness result.  Let
$\Gamma$ stand for a finite set of formulae\footnote{We recognize the
notation overload from the use of `$\anylabelset{}$' for a set of
labels introduced above. We nonetheless stick with the common use of
`$\Gamma$' for a finite set of formulae when speaking of derivations and
soundness, and we rely on context to keep the usages clear.}  with and
$\varphi$ as before be an arbitrary formula.  `$\Gamma \vdash
\varphi$' means that $\varphi$ is derivable from $\Gamma$ and the
axioms using the inference rules of the logic.  `$\Gamma \semcons
\varphi$' means that, in all models, $\varphi$ is true at all the
worlds at which all the members of $\Gamma$ are true.
Our use of `$\semcons$' follows~\cite{conventionbook}.
The more common notation corresponding to this use of `$\semcons$' is
`$\models$'. But, `$\models$' is being used to represent satisfaction
of a formula at a given world. We thus avoid this notational overload.

\begin{theorem}{\bf (Soundness)}
If $\Gamma \vdash \varphi$, then $\Gamma \semcons \varphi$.
\end{theorem}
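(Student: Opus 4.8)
The plan is to prove soundness by induction on the length of a derivation of $\varphi$ from $\Gamma$, establishing the routine strengthening that \emph{every} line of the derivation is true at every world (in every model) at which all members of $\Gamma$ are true. There are three kinds of line. A line that is a member of $\Gamma$ --- which includes preloaded bindings and assertions, instances of the local axiom schemata \ref{las:del}, \ref{las:satt}, \ref{las:sattdel}, and the assumed facts about narrowings and broadenings of labels --- is true at every world satisfying $\Gamma$ by definition, so these need no work. A line obtained by \ref{rule:mp} or \ref{rule:ai} inherits the property from the lines it follows from, directly from the truth conditions \ref{tc:mcond} and \ref{tc:land} for $\mcond$ and $\land$. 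Hence the whole content of the theorem is a \emph{Validity Lemma}: every instance of each logical axiom \ref{ax:1wt}--\ref{ax:12wt} is true at every world of every model.

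To prove the Validity Lemma I would fix a model and an arbitrary world $w$, assume $w$ satisfies the antecedent of the axiom instance, and derive that it satisfies the consequent. For the axioms whose consequent has the form $P\trusts\psi$ it suffices, by \ref{tc:trust}, to fix an arbitrary $w'$ with $R_P(w,w')$ and show $w'\models\psi$; correspondingly each $P\trusts$-conjunct of the antecedent delivers the truth of its inner formula at that same $w'$ (the antecedent and consequent refer to the same $R_P$), and, by \ref{tc:sstmt}, each $\says$-conjunct delivers membership of the uttered $\mathsf{bd\_l\_pair}$ in the relevant set $a_{w'}(\Phi_{(\cdot)})$ (indeed in every world accessible from $w'$). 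The routine cases then read off the matching truth conditions: \ref{ax:1wt} and \ref{ax:5wt} from \ref{tc:bdpair} and \ref{tc:bdlpair} together with the label-inclusion that \ref{tc:wkorder} folds into the truth of a $\wkbroadens$-statement (and, for \ref{ax:1wt}, transitivity of $\wknarrows$), with $\mathsf{bound}$'s incomparability to all other labels disposing of the degenerate sub-cases; \ref{ax:7wt}, \ref{ax:9wt}, and \ref{ax:10wt} from \ref{tc:satt} and \ref{tc:sattdel} (again chaining $\wknarrows$); \ref{ax:8wt} from the second conjunct of \ref{tc:sstmt}; \ref{ax:6wt} because a nonempty $a_w(\Phi_{(D,O)})$ is required to contain $\ltriple{D}{O}{\bound}$ while the delegation clause of \ref{tc:sstmt} is vacuous for a $\mathsf{bd\_pair}$; and \ref{ax:12wt} purely propositionally from \ref{tc:land} and \ref{tc:mcond}.

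The substantive cases are the sattestation-inference axioms \ref{ax:2wt}--\ref{ax:4wt}, and the key observation is that in each the consequent is \emph{exactly} the conclusion obtained by unfolding the defining truth condition of the antecedent involving $\satt{\ell_1}$ or $\sattdel{\ell_1}$ and instantiating its universally quantified inner implication appropriately: for \ref{ax:2wt} one applies \ref{tc:satt} to $\ltriple{D_1}{O_1}{\satt{\ell_1}}$ at $w'$ with the inner SATA taken to be $(D_2,O_2)$ and the two inner labels taken to be $\ell_2$ and $\ell_3$; for \ref{ax:4wt} one applies the second nested implication of \ref{tc:sattdel} to $\ltriple{D_1}{O_1}{\sattdel{\ell_1}}$ at $w'$ with inner SATAs $(D_2,O_2)$ and $(D_3,O_3)$; and for \ref{ax:3wt} one applies the first nested implication of \ref{tc:sattdel} in the same way. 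In each case the membership hypotheses of that inner implication are supplied by the $\says$-conjuncts of the antecedent through \ref{tc:sstmt} (taking the accessible world to be $w'$), and its $\wknarrows$-hypotheses are supplied by the antecedent conjunct $\ell_1\wkbroadens\ell_2\wkbroadens\ell_3$, read as a formula true at $w'$. So these reduce to correctly aligning label arguments and SATA-tuples with the slots of \ref{tc:satt} and \ref{tc:sattdel}.

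I expect the main obstacle to lie in two places. First, \ref{ax:11wt}: from $w\models\sata{}\says\varphi$ one must obtain $w\models P\trusts(\sata{}\says\varphi)$, i.e., that an utterance of $\varphi$ by $(D,O)$ present at every world accessible from $w$ is still present at every world accessible from an $R_P$-successor of $w$; because \ref{tc:sstmt} quantifies over \emph{all} principals, making this go through appears to require a closure property of the accessibility relations (transitivity of $\bigcup_P R_P$) beyond the reflexivity explicitly assumed, and this dependence should be made explicit. Second, the self-reference that the authors flag in \ref{tc:sattdel} means one must know that $\models$ is a single, well-defined relation on the delegatable-sattestor formulae before ``$\Gamma\semcons\varphi$'' is even meaningful; I would fix a reading of \ref{tc:sattdel} (for instance a greatest fixed point of its conditions) and then observe that the argument for \ref{ax:3wt} only ever \emph{uses} \ref{tc:sattdel} at an antecedent and reads off its stated conclusion, so it is insensitive to which fixed point is taken. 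Relative to those two points, the label-broadening bookkeeping across \ref{ax:1wt}--\ref{ax:10wt} is mechanical.
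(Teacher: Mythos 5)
Your proposal takes essentially the same route as the paper: an induction on derivations whose only substantive content is a validity lemma stating that every axiom instance holds at every world of every model, with \ref{rule:mp} and \ref{rule:ai} handled directly by \ref{tc:mcond} and \ref{tc:land}. You are in fact more thorough than the paper, which verifies only Axiom~\ref{ax:3wt} as a representative case and dismisses the rest as ``tedious case checking''; the two obstacles you flag---that \ref{ax:11wt} appears to need a closure property of the accessibility relations beyond the stated reflexivity, and that the circular dependency in \ref{tc:sattdel} must be resolved before satisfaction is well defined---are genuine issues that the paper's own proof does not address.
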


\begin{proof}
To begin we need the following lemma.

\begin{lemma}
All axioms are valid in all models.
\end{lemma}

\begin{proof}
This result follows directly by inspection of the truth conditions,
which amounts to tedious case checking. We thus set out
just one axiom as an example. We choose Axiom~\ref{ax:3wt} because it
uses all of the linguistic expressions of our language, thus invokes
multiple truth conditions.

Suppose $w$ satisfies the antecedent of Axiom~\ref{ax:3wt}, where the
labels are $\ell_1, \ell_2, \ell_3$ and the SATAs are $\sata{1}$,
$\sata{2}$, and $\sata{3}$. Then by \ref{tc:land}, $w$ satisfies
each of the conjuncts. Thus we suppose $w \models \ell_1 \wkbroadens
\ell_2 \wkbroadens \ell_3$.  Let $P$ be a principal for which $w
\models P \trusts \langle D_1, O_1, \sattdel{\ell_1} \rangle$. 
Then by \ref{tc:trust}, for all $w'$ s.t.\ $R_P(w,w')$, we have $w'
\models \langle D_1, O_1, \sattdel{\ell_1} \rangle$. Thus, by
\ref{tc:trust} applied also to the other antecedent conjuncts and then
by clause (a) of \ref{tc:sattdel}, $w' \models \langle D_2, O_2,
\sattdel{\ell_3} \rangle$. Thus by a final invocation of
\ref{tc:trust}, $w \models P \trusts \langle D_2, O_2,
\sattdel{\ell_3} \rangle$.

\end{proof}

We now proceed to prove the theorem by showing that all the ways that
$\varphi$ can follow from $\Gamma$ are ways that preserve truth. This
is also straightforward.
\begin{description}

\item[$\varphi$ is an axiom or member of $\Gamma$.] Then
$\Gamma \semcons \varphi$ trivially.

\item[$\varphi$ is obtained by modus ponens from $\psi$ and $\psi
  \mcond \varphi$.] This follows by a trivial argument via
  strong induction and by the definition of the truth conditions.
  Suppose that soundness holds for all lines of a derivation up to the
  one in question, where $\varphi$ occurs. Then, by inductive
  hypothesis, $\Gamma \semcons \psi$ and $\Gamma \semcons \psi \mcond
  \varphi$.  So, $\Gamma \semcons \varphi$ by~\ref{tc:mcond}.

  \item[$\varphi \land \psi$ is obtained by $\land$-introduction from
    $\varphi$ and $\psi$.]  The same as for MP: Suppose soundness
    holds for all lines of a derivation up to the one in question,
    where $\varphi \land \psi$ occurs. Then, by inductive hypothesis,
    $\Gamma \semcons \varphi$ and $\Gamma \semcons \psi$.  So, $\Gamma
    \semcons \varphi \land \psi$ by the definition of $a_w(\varphi
    \land \psi)$.
  
\end{description}
\end{proof}

\subsubsection{Incompleteness}

Our logic is intentionally not complete by design for practical
considerations.  As one example, our axioms are all based on premises
it is legitimate to assume and on what can follow from them. We thus
do not have to cope with negation in our logic or language. 
So if $(D,O) \not\in \beta_1((D,O)) \cup \beta_2((D,O))$,
then at all $w$, $w \not\models \tpair{D}{O}$.
Given any $\ell \neq \ell'$, such that $\ell \narrows \ell'$ or
$\ell'\narrows \ell$, by \ref{tc:mcond}, at all $w$ both
$w \models \tpair{D}{O} \mcond \ell \wknarrows \ell'$ and
$w \models \tpair{D}{O} \mcond \ell' \wknarrows \ell$.
But both formulae cannot be derivable in our logic (indeed, neither is).

\section{Our logic of identities versus other logics}
\label{sec:our-logic-versus-others}

We present a more detailed comparison below, but in this paragraph we 
list specific distinguishing features of our logic making it not a simple
extension (nor restriction) of others. Authentication logics have no
means to reason about delegation or contextual trust. Logics for local
names support delegation but not global entities with local
properties.  Delegation logics have also been proposed for
authorization and access control. Sattestation is an assigned
permission, so somewhat similar.  But, unlike authorization, we place
no restrictions on which principals can create a sattestation. There
are however conditions for when to trust what is sattested. Also
unlike either authentication logics or authorization logics, our logic
supports authority independence: for a principal to trust that a
$\maybe{}$ is an entity or an entity with particular properties, the
principal must trust that the entity has itself concurred.
Finally, logics of authentication and
authorization generally address the distribution of session keys to
appropriate principals or propagation of authorizations to appropriate
relying parties.  The logic we present herein does not capture any
sending or receiving of messages or acquiring of new
information and is incompatible with those logics.
Indeed, as reflected in our Axiom~\ref{ax:11wt}, it is not
possible for any principal to say anything unless all principals know
that she did.

Burrows, Abadi, and Needham in their logic of
authentication~\cite{ban-tocs90} introduced a construct that
formalizes basic trust as used by us and, as noted above, by Coker et
al. In the language of BAN logic (as it is commonly known) `$P$ {\bf
  controls} $\varphi$' means that $P$ has jurisdiction over the truth
of $\varphi$.  They also have constructs for binding keys to
principals, which allows them to have ``message-meaning'' rules in
their logic, that say, e.g., if $P$ believes $K$ is $Q$'s public key,
and $P$ sees a message stating $\varphi$ that is signed with the
private cognate of $K$, then $P$ believes $Q$ said $\varphi$. A
primary sort of statement made in messages that principals (e.g., key
servers) send is about a session key being good for two principals to
communicate.  Thus if the conclusion of an application of a message
meaning rule is \mbox{$A$ {\bf believes} $S$ {\bf said} {\em
    good}($K,\{A,B\}$)} if also $A$ believes that $S$ controls
statements about the goodness of session keys, then their
``jurisdiction'' rules allows the derivation of \mbox{$A$ {\bf
    believes} {\em good}($K,\{A,B\}$)}.  (The jurisdiction rule
actually refers to what $A$ believes $S$ believes rather than what $S$
says, but since BAN is also a logic for honest principals---who only
say things that they believe---we elide this as aside for our
purposes.) Note also, however, that this is still reflecting $A$
privately receiving $K$ about which she is forming a belief. Indeed,
$A$ only believes {\em good}($K,\{A,B\}$) if she believes that
awareness of $K$ is limited to principals $A$ and $B$ (and $S$ or
other trusted server).

Our logic thus shares much with BAN logic. Though they do
analyze a draft X.509 protocol from the eighties, their work is all
pre-web and is thus not designed for reasoning about TLS (or SSL)
authentication. More importantly, they explicitly recognize the
complexity of delegation of jurisdiction and note 
that they do not attempt to reason about nested jurisdiction~\cite{ban-tocs90}.
Further, though they can express jurisdiction over particular
statements, they have no notion comparable to our contextual labels
much less one with structure such as in our narrowing relation.
BAN logic was reformulated and given a Kripke semantics by Abadi and
Tuttle~\cite{at91}. In this respect, our logic is closer to the
Abadi-Tuttle reformulation than to the original BAN logic. Although,
as previously noted, our logic of trust is not a normal modal logic
while their logic of belief is. In this sense we may be closer to the
original BAN, which limited beliefs to things like key bindings,
utterances made in encrypted messages, and jurisdiction, rather than
representing beliefs about formulae containing logical connectives.

Note that our labels are meant to convey descriptive information about
SATAs. They are simply not roles that can be authorized to perform an
action. Trust in a sattestation might depend on the labels that
the sattestation asserts apply to the sattestee, as well as labels
already trusted to apply to the sattestor. But the only action SATAs
can be ``authorized'' (more accurately `trusted') to do is sattestation.

Various delegation logics have been published that capture forms of
transitive or iterated trust, and that do express contextual names and
contextual trust, for example as formalizations of reasoning about
SPKI (Simple Public Key Infrastructure), SDSI (Simple Distributed
Security Infrastructure), or their combination.
(See~\cite{logical-reconstruction-of-spki} and related work referenced
therein for papers on reasoning about delegation and local names.)

Our contextual labels are not local names (nor global names). Our
labels are not names at all but descriptive labels for principals,
more akin to predicates than individuals to which predicates apply.
It would, nonetheless, be straightforward to express linked-name-based
contextual trust in our logic if reflected in SATAs.

Also, logics for
local names generally have linking axioms giving principals complete
jurisdiction over their local namespace. But sattestor jurisdiction
over a contextual label is not absolute. For example, even if a SATA
is trusted as sattestor for a label, we cannot infer that a sattestee
has that label unless the sattestee has also asserted of itself that
it has that label, or a narrowing of it.
(See Axiom~\ref{ax:2wt} and associated discussion in 
Sec.~\ref{sec:motivation-for-decisions}.) Conversely, a principal will also not automatically trust the label in
a self-sattestation.

Leaving aside for the moment labels (besides $\bound$), SATAs do
provide a sort of local-name. These are, however, comprised of the
binding of two different types of global names. Unlike local names, it
is only the binding that is local, and a (local) sattestor cannot use
locally either global name (domain or onion address) that someone else
controls as part of its local namespace. And like other labels, trust
that $\langle D, O, \bound \rangle$ depends on either trust that
$\sata{}$ has $\sattestor{}$ label wrt itself, or trust in another
sattestor for $\langle D, O, \bound \rangle$ \emph{and} that $\sata{}$
has implicitly or explicitly asserting such binding about itself.

While we do have something that is very roughly akin to authorization
(sattestation), this is primarily in support of inferring trust that a
$\maybe{}$ is an entity or is an entity with a particular property
label.  Access control and authorization logics primarily support
inferences about whether a principal (or in the case of trust
management, a key) should have read or write access to a
resource. Reasoning about the binding of an entity to a key
(authentication) is generally not part of an authorization logic per
se. Also, unlike authorizations, we place no restriction on who may
perform a sattestation. And our logics are designed to work in
environments where everyone has global access to everything (except
private keys).


\section{Worked Example}
\label{sec:appendix-worked-example}

In this section, we describe example scenarios wherein a SATA-aware
browser receives sattestations that are JSON formatted as in the
implementation described in~\cite{satas-wpes21}. And, as a result, the
browser is then able to infer trust in a SATA with specific labels
based on initial trust assumptions.

Assume principal $P$ is a browser with a WebExtension capable of
processing sattestations, evaluating whether it has sufficient
sattestations to trust various SATAs. We will assume for simplicity
that when contacting a SATA, $P$ receives from that SATA, or already
possesses, all sattestations needed for such judgments. We leave as
beyond our current scope the important questions of credential
discovery and delivery for various practical scenarios as well as any
discussion of how and if trust negotiation should be managed. Also out
of scope is any discussion of how $P$ stores or processes
assumptions or inferences; when we state what $P$ trusts, we intend
by this only what $P$ assumes or is entitled to infer based on our
logic.

Assume that $P$ already trusts\\
$\ltriple{\texttt{gsa.gov}}{\mathsf{onion1}}{\sattdel{\mathsf{USG}}}$,
i.e., that a SATA for U.S. General Services Administration (GSA) is
delegatable sattestor for U.S. Government SATAs. Assume further that
$P$ receives the JSON formatted sattestations in
Figures~\ref{fig:gsa-sattestation},\ref{fig:doj-sattestation}.
(In these figures and throughout this appendix, actual onion
addresses are replaced with $\mathsf{onion}n$ in interest of space.
Similarly convenience applies to sattestor signatures in figures.)

\begin{figure}
\begin{center}
\small
\begin{verbatim}
{ "sattestation":  {
    "sattestation_version":1,
    "sattestor_domain":"gsa.gov",
    "sattestor_onion":"onion1",
    "sattestor_refresh_rate":"7 days",
    "sattestees": [
    {
    // bind domain to a self auth. address
      "domain": "justice.gov",
      "onion": "onion2",   // onion address
      "labels": "USG:DOJ",
                "sattestor*(USG:DOJ)",
      "issued": "2023-12-08",
      "refreshed_on": "2024-01-25"
    },
    {
    // bind domain to a self auth. adress
      "domain": "hhs.gov",
      "onion": "onion3",   // onion address
      "labels": "USG:HHS",
                "sattestor*(USG:HHS)",
      "issued": "2023-12-08",
      "refreshed_on": "2023-01-25"
    } ... ]  },
  // signature by sattestor
  "signature": "sig-gsa-1" }
\end{verbatim}
\caption{\small A sattestation by GSA of SATAs for various USG Departments}
\label{fig:gsa-sattestation}
\end{center}
\end{figure}

\begin{figure}
\begin{center}
\small
\begin{verbatim}
{ "sattestation":  {
    "sattestation_version":1,
    "sattestor_domain":"justice.gov",
    "sattestor_onion":"onion2",
    "sattestor_refresh_rate":"7 days",
    "sattestees": [
    {
    // bind domain to a self auth. address
      "domain": "fbi.gov",
      "onion": "onion4",   // onion address
      "labels": "USG:DOJ:FBI",
                "sattestor*(USG:DOJ:FBI)",
      "issued": "2023-12-14",
      "refreshed_on": "2024-01-27"
    },
    {
    // bind domain to a self auth. address
      "domain": "bja.ojp.gov",
      "onion": "onion5",   // onion address
      "labels": "USG:DOJ:OJP:BJA",
      "issued": "2023-12-14",
      "refreshed_on": "2024-01-27"
    }  ]  },
  // signature by sattestor
  "signature": "sig-justice-1" }
\end{verbatim}
\caption{\small A sattestation by DOJ of FBI and BJA SATAs}
\label{fig:doj-sattestation}
\end{center}
\end{figure}

\begin{figure}
\begin{center}
\small
\begin{verbatim}
{ "sattestation":  {
    "sattestation_version":1,
    "sattestor_domain":"fbi.gov",
    "sattestor_onion":"onion4",
    "sattestor_refresh_rate":"7 days",
    "sattestees": [
    {
    // bind domain to a self auth. address
      "domain": "fbi.gov",
      "onion": "onion4",   // onion address
      "labels": "USG:DOJ:FBI",
                "sattestor*(USG:DOJ:FBI)",
      "issued": "2023-12-20",
      "refreshed_on": "2024-01-20"
    }  ]  },
  // signature by sattestor
  "signature": "sig-fbi-1" }
\end{verbatim}
\caption{\small A self-sattestation by an FBI SATA}
\label{fig:fbi-self-sattestation}
\end{center}
\end{figure}

Suppose Alice is looking at an FBI webpage. Her browser, $P$, is
configured with the trust assumptions we have described and has
received sattestations as in the
Figures~\ref{fig:gsa-sattestation}--\ref{fig:fbi-self-sattestation}.
We will use our axioms to derive from this
$P \trusts \ltriple{\mathsf{fbi.gov}}{\mathsf{onion4}}{\mathsf{USG:DOJ:FBI}}$.
We begin by using an instance of Axiom~\ref{ax:3wt} to
derive\\ $P \trusts
\ltriple{\mathsf{doj.gov}}{\mathsf{onion2}}{\sattdel{\mathsf{USG:DOJ}}}$.

$\mathsf{USG} \wkbroadens \mathsf{USG:DOJ} \wkbroadens \mathsf{USG:DOJ:FBI}$, which gives us the first conjunct of the axiom antecedent.
Our initial trust assumption provides the second conjunct.
From receipt of the Figure~\ref{fig:gsa-sattestation} sattestation, we
get\\
$P \trusts \upair{\mathsf{gsa.gov}}{\mathsf{onion1}}
\says$\\
\hspace*{\fill}$\ltriple{\mathsf{doj.gov}}{\mathsf{onion2}}{\sattdel{\mathsf{USG:DOJ}}}$,\\
providing the third conjunct.  From receipt of the
Figure~\ref{fig:doj-sattestation} sattestation we also know that\\
$P \trusts\upair{\mathsf{doj.gov}}{\mathsf{onion2}}
\says$\\
\hspace*{\fill}$\ltriple{\mathsf{fbi.gov}}{\mathsf{onion4}}{\sattdel{\mathsf{USG:DOJ:FBI}}}$.\\ Thus
by Axiom~\ref{ax:8wt}, we also have the fourth conjunct, $P
\trusts\upair{\mathsf{doj.gov}}{\mathsf{onion2}}
\says$\\
\hspace*{\fill}$\ltriple{\mathsf{fbi.gov}}{\mathsf{onion4}}{\satt{\mathsf{USG:DOJ:FBI}}}$.\\
We now have all the antecedents of Axiom~\ref{ax:3wt}, allowing us to apply the axiom and derive its consequent.

If $P$ has also received a self-sattestation from an FBI SATA as in
Figure~\ref{fig:fbi-self-sattestation}, we have
$P \trusts\upair{\mathsf{fbi.gov}}{\mathsf{onion4}} \says$
$\ltriple{\mathsf{fbi.gov}}{\mathsf{onion4}}{\mathsf{USG:DOJ:FBI}}$.
Then by an application of Axiom~\ref{ax:2wt} similar to the
application of Axiom~\ref{ax:3wt} just illustrated, we can derive\\
$P \trusts \ltriple{\mathsf{fbi.gov}}{\mathsf{onion4}}{\mathsf{USG:DOJ:FBI}}$.

Continuing to build on the above, our next example illustrates both the
applicability of local axioms and how delegation can effectively
``break'' narrowing. Suppose that $P$ also trusts Justice Department
sattestations concerning which SATAs are civil liberties
organizations and which SATAs are competent to issue sattestations thereof.

This is reflected in instances of local axiom
schema\\
\ref{las:del}, in particular $\lasdel{\mathsf{USG:DOJ}}{\mathsf{civil\_liberties}}$\\
and $\lasdel{\mathsf{USG:DOJ}}{\sattdel{\mathsf{civil\_liberties}}}$,
as noted in the above discussion leading into this worked example.

\vspace*{-.3\baselineskip}
\begin{equation*}
  \begin{split}
    &A \trusts \ltriple{D}{O}{\mathsf{USG:DOJ}}\\
    &\mcond A \trusts \ltriple{D}{O}{\mathsf{civil\_liberties}}\\
    & \\
    &A \trusts \ltriple{D}{O}{\mathsf{USG:DOJ}}\\
    &\mcond A \trusts \ltriple{D}{O}{\sattdel{\mathsf{civil\_liberties}}}
  \end{split}
\end{equation*}
\vspace*{-.7\baselineskip}

At time of writing, the Dept.\ of Justice, Bureau of Justice
Assistance (BJA) maintains on the web a list of privacy and civil
liberties organizations~\cite{doj-privacy-orgs}. To provide
authenticated, hijack resistant credentials for both the addresses of
these organizations and endorsement of them being competent privacy
and civil liberties organizations, a BJA SATA could provide
sattestations as in Figure~\ref{fig:bja-sattestation}.

  \begin{figure}
\begin{center}
\small
\begin{verbatim}
{ "sattestation":  {
    "sattestation_version":1,
    "sattestor_domain":"bja.ojp.gov",
    "sattestor_onion":"onion5",
    "sattestor_refresh_rate":"7 days",
    "sattestees": [
    {
    // bind domain to a self auth. address
      "domain": "aclu.org",
      "onion": "onion6",   // onion address
      "labels": "civil_liberties",
                "sattestor*(civil_liberties)",
      "issued": "2023-12-05",
      "refreshed_on": "2024-01-24"
    },
    {
    // bind domain to a self auth. address
      "domain": "adl.org",
      "onion": "onion7",   // onion address
      "labels": "civil_liberties",
                "sattestor*(civil_liberties)",
      "issued": "2023-12-05",
      "refreshed_on": "2024-01-24"
    },
    {
    // bind domain to a self auth. address
      "domain": "eff.org",
      "onion": "onion8",   // onion address
      "labels": "civil_liberties",
                "sattestor*(civil_liberties)",
      "issued": "2023-12-05",
      "refreshed_on": "2024-01-24"
    },
    {
    // bind domain to a self auth. address
      "domain": "epic.org",
      "onion": "onion9",   // onion address
      "labels": "civil_liberties",
                "sattestor*(civil_liberties)",
      "issued": "2023-12-05",
      "refreshed_on": "2024-01-24"
    },
    {
    // bind domain to a self auth. address
      "domain": "splc.org",
      "onion": "onion10",   // onion address
      "labels": "civil_liberties",
                "sattestor*(civil_liberties)",
      "issued": "2023-12-05",
      "refreshed_on": "2024-01-24"
    }  ]  },
  // signature by sattestor
  "signature": "sig-bja-1" }
\end{verbatim}
\caption{\small Sattestation by the DoJ Bureau of Justice Assistance of civil liberties SATAs}
\label{fig:bja-sattestation}
\end{center}
\end{figure}

Note that unlike the prior figures, Figure~\ref{fig:bja-sattestation}
does not include sattestee labels reflecting a narrowing of labels for
which the sattestor is assumed or derived to be trusted by $P$. Nonetheless,
because of the local axioms mentioned above, it can ground derivation of
$P \trusts \ltriple{\texttt{eff.org}}{\mathsf{onion8}}{\mathsf{civil\_liberties}}$
in $P$'s trust of
$\ltriple{\texttt{gsa.gov}}{\mathsf{onion1}}{\sattdel{\mathsf{USG}}}$.

And despite this break of narrowing, because that same sattestation by
the Bureau of Justice Assistance SATA covers
$\ltriple{\texttt{eff.org}}{\mathsf{onion8}}{\sattdel{\mathsf{civil\_liberties}}}$,\\
this could continue.
For example, if $P$ receives a (current)
sattestation signed by this EFF SATA for\\
$\ltriple{\texttt{torproject.org}}{\mathsf{onion11}}{\mathsf{civil\_liberties:tech\_partner}}$
as well as a current self-sattestation from that
Tor Project SATA, then this is sufficient to ground\\
$P \trusts$\\
$\ltriple{\texttt{torproject.org}}{\mathsf{onion11}}{\mathsf{civil\_liberties:tech\_partner}}$.

\section{Properties of Trust Derivations}
\label{sec:ax-proof}

We prove properties of the trust derivations obtained with our logic.  These properties describe conditions that must hold if $P$ is able to derive trust in a $\mathsf{bd\_l\_pair}$ $\ltriple{D}{O}{g}$.  We give separate results depending on the form of $g$.  These cover whether $g$ is: \bound{} (Prop.~\ref{prop:bound}); a \satt{}-free label $\ell\in\allsflabels\setminus\{\bound{}\}$ (Prop.~\ref{prop:sf}); of the form \satt{\ell} for some $\ell\in\allsflabels$ (Prop.~\ref{prop:satt}); or of the form \sattdel{\ell} for some $\ell\in\allsflabels$ (Thm.~\ref{thm:sattdel}).  Here, we present the proof of Prop.~\ref{prop:satt}.  The other proofs, presented in the extended version~\cite{jsm24arxiv} of this paper, are similar.

Each of these results guarantees that one of a list of conditions must hold, including the possibility that a term of a certain form appears in \itrust{P} ($P$'s initial trust assumptions).  Other possibilities include that certain types of terms are derivable or that \las{P} ($P$'s trusted instances of local-axiom schemata) includes a certain schema instance.  In the case of Thm.~\ref{thm:sattdel}, our most complex result, there is a chain of triples $(D_i,O_i,\ell_i)$, ending with $(D,O,\ell)$, in which successive triples show how $P$'s trust is transferred and whose initial triple shows that this trust is rooted in something local to $P$ (either a local trust assumption or a local axiom schemata).

If the conclusion holds that certain terms are derivable, at least one such term is such that we may then again apply one of our results.  This allows us to eventually show that the derivation is rooted in something explicitly trusted by $P$ (as an element of \itrust{P} or \las{P}) and that (if applicable) certain sattestations were made by a chain of SATAs and that $P$ trusted corresponding SATAs as sattestors.  We state this formally as a combined result that gives all the possible conditions when starting with the derivability of $P\trusts\ltriple{D}{O}{g}$ for an arbitrary $g$; however, we think that the lengthy case enumeration that would entail would not provide any more insight or utility than what we state here.

\subsection{Definitions and assumptions}
\label{sec:trust-derivation-defs}

We use \itrust{P} to capture the trust that $P$ assumes rather than derives.  We assume that the elements of \itrust{P} are all $\mathsf{t\_stmt}$s of the form $P\trusts X$.  Considering our grammar, each of these is conjunction free.

We use \las{P} for the set of local-axiom schema instances that $P$ assumes.  Each element of \las{P} is assumed to be of one of the forms given above (\ie, \lasdel{g_1}{g_2}, \lassatt{D}{O}, and \lassattdel{D}{O}).  We note that antecedent and conclusion of an instantiation of \lasdel{g_1}{g_2}, and the terms in instantiations of \lassatt{D}{O}, and \lassattdel{D}{O}, are all conjunction free.

We use \tstmts\ for the set of \says\ statements that $P$ sees and may use for derivations.   Each element of \tstmts\ is an $\mathsf{s\_stmt}$ and thus of the form $\upair{D}{O}\says\varphi$, where $\varphi$ is a $\mathsf{bd\_l\_pair}$.  Recall that, using \ref{rule:mp} and Axiom~\ref{ax:11wt}, if $\upair{D}{O}\says\varphi\in\tstmts$, then we can derive $P\trusts \upair{D}{O}\says\varphi$.  Considering our grammar, each element of \tstmts\ is conjunction free.

We consider derivations of terms that use applications of \ref{rule:mp} and \ref{rule:ai} to the axioms and the elements of \itrust{P}, \las{P}, and \tstmts.  As observed in the definitions of these sets, each element of these sets is conjunction free.  Thus, if a derivation of a term $\varphi$ includes the application of \ref{rule:mp} to Axiom~\ref{ax:12awt} (to extract one conjunct from a conjunction), there is also a derivation of $\varphi$ that does not include any applications of \ref{rule:mp} to Axiom~\ref{ax:12awt}.  We thus assume that any derivation in which we are interested does not include any applications of \ref{rule:mp} to Axiom~\ref{ax:12awt}.

Recall that $\bound{}$ is incomparable with every other label.

\subsection{Formal results}

The proofs of these results are similar to each other.  Each fixes a derivation $\Delta$ of a statement $P\trusts\ltriple{D}{O}{g}$ and iterates backwards through the derivation.  Inspection of the various axioms and local axiom schemata identifies which instantiations of these could produce the statement we are then considering.  If $g = \bound{}$ (Prop.~\ref{prop:bound}), then the trust statement may be in \itrust{P}; otherwise, we immediately have that $P\trusts\ltriple{D}{O}{g'}$, for $g'\neq\bound{}$, or that another SATA sattests to $\ltriple{D}{O}{\bound{}}$.  If $g\neq\bound{}$ is sattestor-free (Prop.~\ref{prop:sf}), then moving backwards through the derivation narrows the label associated with \tpair{D}{O} until reaching one that is sattested to by another SATA, is produced by an instance of \lasdel{}{}, or is in \itrust{P}.  If $g = \satt{\ell}$ (Prop.~\ref{prop:satt}), which might arise after working backwards from the sattestor-free case, then moving backwards through the derivation can broaden the label in the argument to $\satt{}$; this broader label may serve as an argument to a \satt{} label bound to \tpair{D}{O} (in \itrust{P}, produced by \lassatt{}{}, or produced by \lasdel{}{}).  A broader label may also serve as an argument to \sattdel{}, either bound to \tpair{D}{O} or bound to a different SATA \tpair{D'}{O'} that sattests to a \satt{} label for \tpair{D}{O}.  Finally, if $g = \sattdel{\ell}$ (Thm.~\ref{thm:sattdel}), then there are conditions for initial trust and local schema analogous to the $\satt{\ell}$ case.  The difference here is that there can be a sequence of multiple sattestations in which each SATA sattests to a \sattdel{} property of the next SATA in the sequence.

We now turn to the formal statements of our results that we have just sketched.  The proof of Prop.~\ref{prop:satt} is presented here; the other proofs are in the extended version~\cite{jsm24arxiv} of this paper.

\begin{proposition}\label{prop:bound}
If
\begin{equation*}
P \trusts \ltriple{D}{O}{\bound{}}
\end{equation*}
is derivable by repeatedly applying Rules~\ref{rule:mp} and~\ref{rule:ai} to the axioms, the instances of the Local Axiom Schemata in \las{P}, the elements of \tstmts, and the elements of \itrust{P}, then one of the following conditions holds:
\begin{enumerate}
    \item For some $g\in\alllabels$, $P \trusts \ltriple{D}{O}{g}$ is derivable;\label{bd:dog} or
    \item For some $\upair{D'}{O'}$:\label{bd:satt}
    \begin{enumerate}
        \item $P \trusts \ltriple{D'}{O'}{\satt{\bound{}}}$ is derivable; and
        \item $P \trusts \upair{D'}{O'} \says \ltriple{D}{O}{\bound{}}$ is derivable;
    \end{enumerate}
    or
    \item $P \trusts \ltriple{D}{O}{\bound{}}$ is in $\itrust{P}$.
\end{enumerate}
\end{proposition}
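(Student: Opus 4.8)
The plan is to fix a derivation $\Delta$ of $P\trusts\ltriple{D}{O}{\bound{}}$, take it to be of minimal length, and examine how its last line was produced. First I would record the structural facts I need about derivations built from \ref{rule:mp}, \ref{rule:ai}, the axioms, $\itrust{P}$, $\las{P}$, and $\tstmts$: the term $P\trusts\ltriple{D}{O}{\bound{}}$ is an atomic $\mathsf{t\_stmt}$ rather than a conjunction, so it is not the conclusion of an \ref{rule:ai} step; every axiom and every instance of $\lasdel{g_1}{g_2}$ is an implication; instances of $\lassatt{D}{O}$ and $\lassattdel{D}{O}$ carry a $\satt{\ell}$ or $\sattdel{\ell}$ label and so are never of the form $P\trusts\ltriple{D}{O}{\bound{}}$; and the elements of $\tstmts$ are $\says$-statements. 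Hence if the last line of $\Delta$ is not the conclusion of \ref{rule:mp}, it can only be an element of $\itrust{P}$---which is condition~3---and it remains to analyze the case in which it is obtained by \ref{rule:mp}.

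In that case there is an implication $\psi\mcond P\trusts\ltriple{D}{O}{\bound{}}$ occurring earlier in $\Delta$, together with a derivation of $\psi$. Since no axiom has an implication as its consequent, and neither $\itrust{P}$, nor $\tstmts$, nor the non-$\lasdel{}{}$ instances in $\las{P}$ contain implications, this implication is itself an axiom instance or a $\lasdel{g_1}{g_2}$ instance; so the work reduces to reading down the list and asking which of them can have a consequent of the form $P\trusts\ltriple{D}{O}{\bound{}}$. The consequents of \ref{ax:3wt}, \ref{ax:4wt}, and \ref{ax:6wt}--\ref{ax:11wt} never have this form; \ref{ax:12bwt} and \ref{ax:12cwt} produce conjunctions; and \ref{ax:12awt} is excluded because (by the normalization already in force in Sec.~\ref{sec:trust-derivation-defs}) we consider only derivations with no application of \ref{rule:mp} to \ref{ax:12awt}. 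What remains is \ref{ax:1wt}, \ref{ax:2wt}, \ref{ax:5wt}, and $\lasdel{g_1}{g_2}$ with $g_2=\bound{}$. For \ref{ax:5wt} and for $\lasdel{g_1}{\bound{}}$ the antecedent is exactly $P\trusts\ltriple{D}{O}{g}$ for some $g\in\alllabels$ (and, since $\Delta$ is minimal, one may take $g\neq\bound{}$, matching the informal sketch), so its subderivation witnesses condition~1. For \ref{ax:1wt} with the required consequent, incomparability of $\bound{}$ forces both labels in the instance to be $\bound{}$, so its antecedent already contains $P\trusts\ltriple{D}{O}{\bound{}}$, and minimality of $\Delta$ rules this case out. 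For \ref{ax:2wt}, incomparability of $\bound{}$ again collapses $\ell_1\wkbroadens\ell_2\wkbroadens\ell_3$ to $\ell_1=\ell_2=\ell_3=\bound{}$ and pins the consequent's SATA to $(D,O)$; unpacking the derived antecedent conjunction into its conjuncts---each conjunct of a derivable conjunction being derivable, via \ref{ax:12awt} and \ref{rule:mp}---produces derivations of $P\trusts\ltriple{D'}{O'}{\satt{\bound{}}}$ and of $P\trusts\upair{D'}{O'}\says\ltriple{D}{O}{\bound{}}$ for the $(D',O')$ appearing as $(D_1,O_1)$ in that \ref{ax:2wt} instance, which is condition~2.

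I expect the main obstacle to be the conjunction bookkeeping around the \ref{ax:12wt} axioms: one must be sure that ``the antecedent of that \ref{ax:2wt} instance was derived in $\Delta$'' truly yields derivations of each individual conjunct, and that minimality of $\Delta$ together with the standing ban on applying \ref{rule:mp} to \ref{ax:12awt} really does close off the circular \ref{ax:1wt} case (and the spurious $g=\bound{}$ instances of \ref{ax:5wt} and $\lasdel{}{}$) while keeping the case analysis finite. This is the same conjunction-manipulation reasoning that recurs throughout Sec.~\ref{sec:ax-proof}, so I anticipate it will be routine; the remainder is a finite inspection of the axiom list, and, as the text notes, the same skeleton should transfer to Prop.~\ref{prop:sf}, Prop.~\ref{prop:satt}, and Thm.~\ref{thm:sattdel}.
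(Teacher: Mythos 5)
Your proposal is correct and follows essentially the same route as the paper's proof: a backward case analysis on how the last line of $\Delta$ was produced, with Axiom~\ref{ax:2wt} yielding Condition~2, Axiom~\ref{ax:5wt} and $\lasdel{g}{\bound{}}$ yielding Condition~1, the degenerate Axiom~\ref{ax:1wt} case collapsing by incomparability of $\bound{}$, and membership in $\itrust{P}$ as the only remaining possibility. The only cosmetic difference is that you dispatch the \ref{ax:1wt} case by minimality of $\Delta$ where the paper instead steps backward to the previous occurrence of the same statement; these are interchangeable.
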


\arxivonly{
\begin{proof}[Proof of Prop.~\ref{prop:bound}]
Let $\gamma = P \trusts \ltriple{D}{O}{\bound{}}$ and consider a derivation $\Delta$ of $\gamma$.

\begin{enumerate}
    \item If $\gamma$ is produced by an application of \ref{rule:mp} to Axiom~\ref{ax:1wt}, then (because \bound{} is incomparable to all other labels), the antecedent in this instance of the axiom was also $\gamma$, and we consider the previous step in $\Delta$.

    \item If $\gamma$ was produced by an application of \ref{rule:mp} to Axiom~\ref{ax:2wt}, then this instance of the axiom has, as a conjunct in its antecedent, the term $P \trusts \ltriple{D'}{O'}{\satt{\bound{}}}$ for some \upair{D'}{O'}.  The antecedent also includes a conjunct $P \trusts \upair{D'}{O'} \says \ltriple{D}{O}{\bound{}}$ (for the same \upair{D'}{O'}).  Thus, we satisfy the claimed Condition~\ref{bd:satt}.
    
    \item If $\gamma$ was produced by an application of \ref{rule:mp} to Axiom~\ref{ax:5wt}, then the term $P \trusts \ltriple{D}{O}{g}$ is derivable for some $g\in\alllabels$.  Similarly, if $\gamma$ was produced by an application of \ref{rule:mp} to an instance of $\lasdel{g}{\bound{}}\in\las{P}$, then $P \trusts \ltriple{D}{O}{g}$ is again derivable.  In either of these cases, we satisfy the claimed Condition~\ref{bd:dog}.
\end{enumerate}

No other steps could produce $\gamma$ in $\Delta$ if it was not in $\itrust{P}$, completing the proof.
\end{proof}
} 

\begin{proposition}\label{prop:sf}
If, for $\ell\in\allsflabels$, $\ell\neq\bound{}$,
\begin{equation*}
P \trusts \ltriple{D}{O}{\ell}
\end{equation*}
is derivable by repeatedly applying Rules~\ref{rule:mp} and~\ref{rule:ai} to the axioms, the instances of the Local Axiom Schemata in \las{P}, the elements of \tstmts, and the elements of \itrust{P}, then one of the following conditions holds:
\begin{enumerate}
    \item $P \trusts \ltriple{D}{O}{\widetilde{\ell}}$ is in $\itrust{P}$ for some $\widetilde{\ell}\wknarrows\ell$;\label{sf:itrust} or
    \item For some $g\in\alllabels$ and $\widetilde{\ell}\wknarrows\ell$:\label{sf:del}
    \begin{enumerate}
        \item $\lasdel{g}{\widetilde{\ell}}\in\las{P}$; and
        \item $P\trusts \ltriple{D}{O}{g}$ is derivable;
    \end{enumerate}
    or
    \item For some $\upair{D'}{O'}$ and $\widehat{\ell}$ and $\widetilde{\ell}$ such that $\widehat{\ell}\wkbroadens\widetilde{\ell}\wknarrows\ell$:\label{sf:satt}
    \begin{enumerate}
        \item $P \trusts \ltriple{D'}{O'}{\satt{\widehat{\ell}}}$ is derivable; and
        \item $P \trusts \upair{D'}{O'} \says \ltriple{D}{O}{\widetilde{\ell}}$ is derivable
    \end{enumerate}
\end{enumerate}
\end{proposition}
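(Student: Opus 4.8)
The plan is to mirror the proof of Prop.~\ref{prop:bound}: fix a derivation $\Delta$ of $\gamma_0 = P\trusts\ltriple{D}{O}{\ell}$ and walk backwards through it, at each stage examining a derivable statement of the form $P\trusts\ltriple{D}{O}{\ell'}$ with $\ell'\in\allsflabels\setminus\{\bound{}\}$ and $\ell'\wknarrows\ell$ (starting with $\ell' = \ell$), asking how it was produced in $\Delta$. Such a statement is neither an implication nor a conjunction, so it is not an axiom instance, not a local axiom schema instance (instances of \ref{las:satt} and \ref{las:sattdel} carry $\satt{}$/$\sattdel{}$ labels, and instances of \ref{las:del} are implications), and not in \tstmts\ (whose members are \says-statements); hence it is either a member of \itrust{P} or the conclusion of an \ref{rule:mp} step. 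A pass over the axioms shows that the implication feeding such an \ref{rule:mp} step must be Axiom~\ref{ax:1wt}, Axiom~\ref{ax:2wt}, or an instance of \ref{las:del}: Axiom~\ref{ax:5wt} forces the label $\bound{}$ (excluded); Axioms~\ref{ax:3wt}, \ref{ax:4wt}, \ref{ax:7wt}, \ref{ax:9wt}, \ref{ax:10wt} produce $\satt{}$- or $\sattdel{}$-labelled conclusions; Axioms~\ref{ax:6wt}, \ref{ax:8wt}, \ref{ax:11wt} produce \says-statements; Axioms~\ref{ax:12bwt}, \ref{ax:12cwt} and \ref{rule:ai} produce conjunctions; and Axiom~\ref{ax:12awt} is ruled out by the standing assumption on $\Delta$.

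Next I dispatch the three terminal cases. If the current statement lies in \itrust{P}, Condition~\ref{sf:itrust} holds with $\widetilde{\ell} = \ell'$. If it is obtained by \ref{rule:mp} from an instance $\lasdel{g}{\ell'}$ of \ref{las:del}, then $\lasdel{g}{\ell'}\in\las{P}$, its antecedent $P\trusts\ltriple{D}{O}{g}$ is derivable, and Condition~\ref{sf:del} holds with $\widetilde{\ell} = \ell'$. If it is obtained by \ref{rule:mp} from an instance of Axiom~\ref{ax:2wt}, with labels $\ell_1\wkbroadens\ell_2\wkbroadens\ell_3$ and the conclusion matched as $D = D_2$, $O = O_2$, $\ell' = \ell_2$, then the antecedent conjuncts $P\trusts\ltriple{D_1}{O_1}{\satt{\ell_1}}$ and $P\trusts\upair{D_1}{O_1}\says\ltriple{D_2}{O_2}{\ell_2}$ are derivable, so Condition~\ref{sf:satt} holds with $\upair{D'}{O'} = \upair{D_1}{O_1}$, $\widehat{\ell} = \ell_1$, and $\widetilde{\ell} = \ell_2 = \ell'$, for which $\widehat{\ell}\wkbroadens\widetilde{\ell}\wknarrows\ell$. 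In each case the conjunct statements asserted to be ``derivable'' in fact occur as conjunction-free lines of $\Delta$, recovered by tracing the conjunction that is the axiom's antecedent back through \ref{rule:ai} and the associativity/commutativity axioms (legitimate because \itrust{P}, \las{P}, and \tstmts\ contain no conjunctions and no axiom is a bare conjunction).

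The one remaining case is \ref{rule:mp} from an instance of Axiom~\ref{ax:1wt}: here $\ell' = \ell_1$ and the antecedent contains the conjunct $P\trusts\ltriple{D}{O}{\ell_2}$ with $\ell_2\wknarrows\ell_1 = \ell'$, hence $\ell_2\wknarrows\ell$ by transitivity of $\wknarrows$. I observe that $\ell_2\in\allsflabels$ and $\ell_2\neq\bound{}$: since $\bound{}$ is incomparable to every other label and $\ell'\neq\bound{}$, the relation $\bound{}\wknarrows\ell'$ fails, so $\ell_2 = \bound{}$ is impossible. This conjunct is derivable and occurs as a line of $\Delta$ strictly earlier than the current one, so I recurse with it. Because $\Delta$ is finite and the position of the current line strictly decreases at each recursion, the walk terminates, and it can only terminate in one of the three dispatched cases, which proves the proposition.

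I expect the main obstacle to be the conjunction bookkeeping: one must argue rigorously that whenever the antecedent of an applied axiom instance appears as a line of $\Delta$, each of its atomic conjuncts is itself derivable — ideally by exhibiting it as an earlier line of $\Delta$ after peeling off \ref{rule:ai} and Axiom~\ref{ax:12bwt}/\ref{ax:12cwt} steps — and that this peeling, together with the backward walk, is genuinely well-founded on the finite object $\Delta$. By contrast the label-side reasoning (preserving $\ell'\wknarrows\ell$ and keeping $\ell'\neq\bound{}$ via incomparability of $\bound{}$) and the verification that Axiom~\ref{ax:2wt} is the only non-degenerate axiom able to re-enter $\satt{}$-labelled territory from a sattestor-free label are routine.
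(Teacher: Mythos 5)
Your proposal is correct and follows essentially the same route as the paper's own proof: a backwards walk through a fixed derivation $\Delta$, maintaining a current statement $P\trusts\ltriple{D}{O}{\widetilde{\ell}}$ with $\widetilde{\ell}\wknarrows\ell$, with the terminal cases (\itrust{P}, Axiom~\ref{ax:2wt}, \ref{las:del}) mapped to Conditions~\ref{sf:itrust}--\ref{sf:satt} exactly as the paper does and Axiom~\ref{ax:1wt} as the sole recursive case. Your added care about excluding the other axioms explicitly, keeping $\widetilde{\ell}\neq\bound{}$ via incomparability, and the well-foundedness of the walk only makes explicit what the paper leaves to ``by inspection.''
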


\arxivonly{
\begin{proof}[Proof of Prop.~\ref{prop:sf}]
Fix a derivation $\Delta$ of $\gamma = P \trusts \ltriple{D}{O}{\ell}$.  Start with $\widetilde{\gamma} = \gamma$.  Inductively, we assume $\widetilde{\gamma}$ has been defined to be of the form $P\trusts\ltriple{D}{O}{\widetilde{\ell}}$ for $\widetilde{\ell}\in\allsflabels$ and $\widetilde{\ell}\wknarrows\ell$, and we consider how it was produced in $\Delta$.

\begin{enumerate}
    \item If $\widetilde{\gamma} \in \itrust{P}$, then we have satisfied Condition~\ref{sf:itrust} and we terminate the induction.
    
    \item If $\widetilde{\gamma}$ results from an application of \ref{rule:mp} to Axiom~\ref{ax:1wt}, then the antecedent of the axiom instance includes as a conjunct $P\trusts\ltriple{D}{O}{\ell'}$ for some $\ell'\wknarrows\widetilde{\ell}$.  We update $\widetilde{\gamma}$ to be this conjunct (setting $\widetilde{\ell} = \ell'$) and continue the induction.
    
    \item If $\widetilde{\gamma}$ results from an application of \ref{rule:mp} to Axiom~\ref{ax:2wt}, then the antecedent of this instance of the axiom includes as conjuncts, for some $\widehat{\ell}\wkbroadens\widetilde{\ell}$ and some \upair{D'}{O'}, the terms $P \trusts \ltriple{D'}{O'}{\satt{\widehat{\ell}}}$ and $P \trusts \upair{D'}{O'} \says \ltriple{D}{O}{\widetilde{\ell}}$.  This satisfies the claimed Condition~\ref{sf:satt}, and we terminate the induction.
    
    \item If $\widetilde{\gamma}$ results from the application of \ref{rule:mp} to an instance of \ref{las:del}, then that instance must be of the form $\lasdel{g}{\widetilde{\ell}}$ for some $g\in\alllabels$, and $P\trusts\ltriple{D}{O}{g}$ is derivable.  This satisfies the claimed Condition~\ref{sf:del}, and we terminate the induction.
\end{enumerate}
By inspection, and our assumption that $\Delta$ contains no applications of Axiom~\ref{ax:12awt}, there are no other ways to derive $\widehat{\gamma}$, completing the proof.
\end{proof}
} 

\begin{proposition}\label{prop:satt}
If
\begin{equation*}
P \trusts \ltriple{D}{O}{\satt{\ell}}
\end{equation*}
is derivable by repeatedly applying Rules~\ref{rule:mp} and~\ref{rule:ai} to the axioms, the instances of the Local Axiom Schemata in \las{P}, the elements of \tstmts, and the elements of \itrust{P}, then one of the following conditions holds:
\begin{enumerate}
    \item $P \trusts \ltriple{D}{O}{\satt{\widehat{\ell}}}$ is in $\itrust{P}$ for some $\widehat{\ell}\wkbroadens\ell$;\label{satt:itrust} or
    \item $\lassatt{D}{O}\in\las{P}$;\label{satt:lassatt} or
    \item For some $g\in\alllabels$ and  $\widehat{\ell}\wkbroadens\ell$:\label{satt:lasdel}
    \begin{enumerate}
        \item $\lasdel{g}{\satt{\widehat{\ell}}}\in\las{P}$; and
        \item $P \trusts \ltriple{D}{O}{g}$ is derivable;
    \end{enumerate}
    or
    \item $P \trusts \ltriple{D}{O}{\sattdel{\widehat{\ell}}}$ is derivable for some $\widehat{\ell}\wkbroadens\ell$;\label{satt:sattsame} or
    \item For some $\upair{D'}{O'}$ and $\ell'\wkbroadens\ell''\wkbroadens\ell$:\label{satt:sattdel}
    \begin{enumerate}
        \item $P \trusts \ltriple{D'}{O'}{\sattdel{\ell'}}$ is derivable; and
        \item $P \trusts \upair{D'}{O'} \says \ltriple{D}{O}{\satt{\ell''}}$ is derivable
    \end{enumerate}
\end{enumerate}
\end{proposition}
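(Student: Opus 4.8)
The plan is to adapt the backward-induction argument used for Prop.~\ref{prop:bound} and Prop.~\ref{prop:sf}. I fix a derivation $\Delta$ of $\gamma = P\trusts\ltriple{D}{O}{\satt{\ell}}$, set $\widetilde{\gamma} := \gamma$ and $\widetilde{\ell} := \ell$, and maintain the invariant that $\widetilde{\gamma}$ has the form $P\trusts\ltriple{D}{O}{\satt{\widetilde{\ell}}}$ with $\widetilde{\ell}\in\allsflabels$ and $\widetilde{\ell}\wkbroadens\ell$. At each stage I ask how $\widetilde{\gamma}$ was produced in $\Delta$, using the standing assumption (Sec.~\ref{sec:trust-derivation-defs}) that $\Delta$ contains no application of \ref{rule:mp} to Axiom~\ref{ax:12awt}, together with the observation that no axiom or local-axiom schema has an implication on the right of its top-level $\mcond$, so every implication consumed by \ref{rule:mp} is literally an axiom instance or a local-schema instance.

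First I would enumerate the productions that can yield a statement of the form $P\trusts\ltriple{D}{O}{\satt{\widetilde{\ell}}}$. Since $\satt{\widetilde{\ell}}\notin\allsflabels$, Axioms~\ref{ax:1wt} and~\ref{ax:2wt} are ruled out (their conclusions carry $\allsflabels$-labels), as are~\ref{ax:3wt},~\ref{ax:5wt},~\ref{ax:10wt} and instances of~\ref{las:sattdel} (these carry $\sattdel{}$- or $\bound{}$-labels), and~\ref{ax:6wt},~\ref{ax:8wt},~\ref{ax:11wt} (these carry $\says$-statements); \ref{rule:ai} and the surviving parts of~\ref{ax:12wt} produce only conjunctions; and elements of \tstmts{} are $\mathsf{s\_stmt}$s, not trust-in-binding statements. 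What remains are exactly: $\widetilde{\gamma}\in\itrust{P}$; $\widetilde{\gamma}$ is an instance of \lassatt{D}{O} in \las{P}; or $\widetilde{\gamma}$ follows by \ref{rule:mp} from Axiom~\ref{ax:9wt}, from Axiom~\ref{ax:7wt}, from Axiom~\ref{ax:4wt}, or from an instance of~\ref{las:del} whose conclusion is $P\trusts\ltriple{D}{O}{\satt{\widetilde{\ell}}}$ (necessarily of the form \lasdel{g}{\satt{\widetilde{\ell}}}).

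Then I would dispatch these cases. If $\widetilde{\gamma}\in\itrust{P}$, take $\widehat{\ell}=\widetilde{\ell}$ and obtain Condition~\ref{satt:itrust}; if $\widetilde{\gamma}$ is a \lassatt{D}{O} instance, obtain Condition~\ref{satt:lassatt}; if $\widetilde{\gamma}$ comes via \ref{rule:mp} from Axiom~\ref{ax:7wt}, then the antecedent $P\trusts\ltriple{D}{O}{\sattdel{\widetilde{\ell}}}$ occurs earlier in $\Delta$ and is derivable, giving Condition~\ref{satt:sattsame} with $\widehat{\ell}=\widetilde{\ell}$; if $\widetilde{\gamma}$ comes from Axiom~\ref{ax:4wt}, matching its conclusion $\ltriple{D_2}{O_2}{\satt{\ell_3}}$ forces $D=D_2$, $O=O_2$, $\widetilde{\ell}=\ell_3$, and the antecedent conjuncts supply derivable $P\trusts\ltriple{D_1}{O_1}{\sattdel{\ell_1}}$ and $P\trusts\upair{D_1}{O_1}\says\ltriple{D_2}{O_2}{\satt{\ell_2}}$ with $\ell_1\wkbroadens\ell_2\wkbroadens\ell_3$, so setting $D'=D_1$, $O'=O_1$, $\ell'=\ell_1$, $\ell''=\ell_2$ and using transitivity of $\wknarrows$ to get $\ell''\wkbroadens\ell$ yields Condition~\ref{satt:sattdel}; and if $\widetilde{\gamma}$ comes from an instance \lasdel{g}{\satt{\widetilde{\ell}}}, then $P\trusts\ltriple{D}{O}{g}$ is derivable and $\widehat{\ell}=\widetilde{\ell}$ gives Condition~\ref{satt:lasdel}. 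The single non-terminating case is \ref{rule:mp} applied to Axiom~\ref{ax:9wt}: there the relevant antecedent conjunct is $P\trusts\ltriple{D}{O}{\satt{\ell_1}}$ with $\ell_1\wkbroadens\widetilde{\ell}$, so I replace $\widetilde{\gamma}$ by that conjunct, set $\widetilde{\ell}:=\ell_1$ (which keeps $\widetilde{\ell}\wkbroadens\ell$ by transitivity), and recurse on an earlier line of $\Delta$.

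Finally I would close with termination: each non-terminating step strictly decreases the position within the finite derivation $\Delta$, so the process halts in one of the terminating cases and thereby establishes one of the five conditions. I expect the only delicate points to be (i) being genuinely exhaustive about which productions can yield a trust statement of the form $P\trusts\ltriple{D}{O}{\satt{\widetilde{\ell}}}$---in particular ruling out~\ref{ax:1wt} and~\ref{ax:2wt} via the $\allsflabels$-versus-$\alllabels$ distinction and justifying that no new implication can ever be derived, so that every \ref{rule:mp} step consumes an actual axiom/schema instance---and (ii) keeping the label-ordering bookkeeping straight across the~\ref{ax:9wt} and~\ref{ax:4wt} cases so that the $\widehat{\ell}\wkbroadens\ell$ and $\ell'\wkbroadens\ell''\wkbroadens\ell$ relations required by the conclusion fall out by transitivity of $\wknarrows$. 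Neither is conceptually hard; the bookkeeping is the main thing to watch.
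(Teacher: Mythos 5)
Your proposal is correct and follows essentially the same route as the paper's proof: a backward induction through a fixed derivation $\Delta$ (with Axiom~\ref{ax:12awt} excluded), the same exhaustive case split into $\itrust{P}$, \lassatt{D}{O}, \lasdel{}{}, Axioms~\ref{ax:4wt} and~\ref{ax:7wt} as terminating cases, and Axiom~\ref{ax:9wt} as the sole recursive case that broadens the tracked label. Your version is, if anything, slightly more explicit than the paper's about why the remaining axioms cannot produce a $\satt{}$-labeled trust statement and about termination of the recursion.
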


\begin{proof}[Proof of Prop.~\ref{prop:satt}]
Fix a derivation $\Delta$ of $\widehat{\gamma} = P \trusts \ltriple{D}{O}{\satt{\ell}}$ using applications Rules~\ref{rule:mp} and~\ref{rule:ai} to the axioms other than Axiom~\ref{ax:12wt}a, the instances of the Local Axiom Schemata in \las{P}, the elements of \tstmts, and the elements of \itrust{P}.

We work inductively backwards through $\Delta$.  We assume that $\widehat{\gamma} = P \trusts \ltriple{D}{O}{\satt{\widehat{\ell}}}$ for some $\widehat{\ell}\wkbroadens\ell$, and we consider how $\widehat{\gamma}$ was derived in $\Delta$.
\begin{enumerate}
    \item If $\widehat{\gamma}\in\itrust{P}$, then the claimed Condition~\ref{satt:itrust} is satisfied, and we terminate the induction.

    \item If $\widehat{\gamma}$ is obtained by an application of \ref{rule:mp} to an instance of Axiom~\ref{ax:4wt}, then the antecedent of that axiom instantiation includes conjuncts $P \trusts \ltriple{D'}{O'}{\sattdel{\ell'}}$ and $P \trusts \upair{D'}{O'} \says \ltriple{D}{O}{\satt{\ell''}}$ for some \upair{D'}{O'} and $\ell'\wkbroadens\ell''\wkbroadens\widehat{\ell}$.  The conjuncts are thus derivable, satisfying the claimed Condition~\ref{satt:sattdel}, and we terminate the induction.

    \item If $\widehat{\gamma}$ is obtained by an application of \ref{rule:mp} to an instance of Axiom~\ref{ax:7wt}, then the antecedent of that axiom instantiation is $P \trusts \ltriple{D}{O}{\sattdel{\widehat{\ell}}}$; this is derivable, satisfying claimed Condition~\ref{satt:sattsame}, and we terminate the induction.

    \item If $\widehat{\gamma}$ is obtained by an application of \ref{rule:mp} to an instance of Axiom~\ref{ax:9wt}, then the antecedent of that axiom instantiation is $P \trusts \ltriple{D'}{O'}{\sattdel{\ell'}}$ for some \upair{D'}{O'} and $\ell'\wkbroadens\widehat{\ell}\wkbroadens\ell$.  We update $\widehat{\gamma}$ to be this antecedent and continue the induction.
    
    \item If $\widehat{\gamma}$ is obtained by an application of \ref{rule:mp} to an instance of \lasdel{g_1}{g_2}, then we have $g_2 = \satt{\widehat{\ell}}$ and the term $P\trusts\ltriple{D}{O}{g_1}$ must be derivable.  This satisfies the claimed Condition~\ref{satt:lasdel}, and we terminate the induction.
    
    \item If $\widehat{\gamma}$ is obtained by an application of \ref{las:satt}, then $\lassatt{D}{O}\in\las{P}$, satisfying claimed Condition~\ref{satt:lassatt}, and we terminate the induction.
\end{enumerate}

By inspection and our assumption that $\Delta$ contains no applications of Axiom~\ref{ax:12awt}, there are no other ways to derive $\widehat{\gamma}$, completing the proof.
\end{proof}

\begin{theorem}\label{thm:sattdel}
If
\begin{equation*}
P \trusts \ltriple{D}{O}{\sattdel{\ell}}
\end{equation*}
is derivable by repeatedly applying Rules~\ref{rule:mp} and~\ref{rule:ai} to the axioms, the instances of the Local Axiom Schemata in \las{P}, the elements of \tstmts, and the elements of \itrust{P}, then there is a sequence $\{(D_i,O_i,\ell_i)\}_{i=0}^k$ such that the following conditions hold
\begin{enumerate}
    \item $(D_k,O_k,\ell_k) = (D,O,\ell)$;\label{cond:lasttriple} and
    \item One of the following holds:
    \begin{enumerate}
        \item $P\trusts \ltriple{D_0}{O_0}{\sattdel{\ell_0}}\in\itrust{P}$ and $\ell_0\wkbroadens\ell$; or\label{cond:itrust}
        \item $\lassattdel{D_0}{O_0}\in\las{P}$; or\label{cond:lassattdel}
        \item For some $g\in\alllabels$ and some $\ell_0\wkbroadens\ell$:\label{cond:lasdel}
        \begin{enumerate}
            \item $\lasdel{g}{\sattdel{\ell_0}}\in\las{P}$; and
            \item $P \trusts \ltriple{D_0}{O_0}{g}$ is derivable;
        \end{enumerate}
    \end{enumerate}
    and
    \item For $0\leq i< k$:
    \begin{enumerate}
        \item $\ell_i \wkbroadens \ell_{i+1}$;\label{cond:labeli} and
       \item $P\trusts \ltriple{D_{i}}{O_{i}}{\sattdel{{\ell_{i}}}}$\label{cond:trusti} is derivable;
    \end{enumerate}
    and
    \item For $1\leq i < k$, $P$ can derive
    \[
P\trusts \upair{D_i}{O_i} \says \ltriple{D_{i+1}}{O_{i+1}}{\sattdel{\widetilde{\ell_{i+1}}}}
\]\label{cond:linkedtrust} for some $\widetilde{\ell_{i+1}}$ such that $\ell_i \wkbroadens \widetilde{\ell_{i+1}} \wkbroadens \ell_{i+1}$; and
    \item $\upair{D_0}{O_0} = \upair{D_1}{O_1}$\label{cond:initsame}
\end{enumerate}
\end{theorem}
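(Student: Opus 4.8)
The plan is to follow the backward-induction template of Props.~\ref{prop:bound}--\ref{prop:satt}, but to carry along a partial \emph{delegation chain} that gains one link each time Axiom~\ref{ax:3wt} is applied. First I would fix a derivation $\Delta$ of $P\trusts\ltriple{D}{O}{\sattdel{\ell}}$ with no application of \ref{rule:mp} to Axiom~\ref{ax:12awt} (legitimate by the conjunction-freeness remarks in Sec.~\ref{sec:trust-derivation-defs}) and enumerate how a line of the shape $P\trusts\ltriple{\widehat{D}}{\widehat{O}}{\sattdel{\widehat{\ell}}}$ can be produced in it. Inspecting the consequents of all axioms and local schemata, such a line must be one of: (a)~a member of \itrust{P}; (b)~an instance of \lassattdel{\widehat{D}}{\widehat{O}}; (c)~the conclusion of \ref{rule:mp} on an instance of \lasdel{g}{\sattdel{\widehat{\ell}}} for some $g\in\alllabels$, in which case $P\trusts\ltriple{\widehat{D}}{\widehat{O}}{g}$ is derivable; (d)~the conclusion of \ref{rule:mp} on Axiom~\ref{ax:10wt}, coming from a line $P\trusts\ltriple{\widehat{D}}{\widehat{O}}{\sattdel{\ell^{+}}}$ with $\ell^{+}\wkbroadens\widehat{\ell}$; or (e)~the conclusion of \ref{rule:mp} on Axiom~\ref{ax:3wt}. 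Since the label here has the form $\sattdel{\cdot}$, the axioms that produce $\bound{}$-, $\satt{\cdot}$-, or sattestor-free labels cannot occur, so, unlike in Prop.~\ref{prop:bound}, the incomparability of $\bound{}$ plays no role in this proof.

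Second, I would run an induction whose state is a current derivable statement $\widehat{\gamma} = P\trusts\ltriple{\widehat{D}}{\widehat{O}}{\sattdel{\widehat{\ell}}}$ together with a list $C = (D_1,O_1,\ell_1),\dots,(D_r,O_r,\ell_r)$ satisfying: (i)~$(D_r,O_r,\ell_r) = (D,O,\ell)$; (ii)~$(\widehat{D},\widehat{O}) = (D_1,O_1)$ and $\widehat{\ell}\wkbroadens\ell_1\wkbroadens\ell$; and (iii)~for each $1\le j<r$, $\ell_j\wkbroadens\ell_{j+1}$, the statement $P\trusts\ltriple{D_j}{O_j}{\sattdel{\ell_j}}$ is derivable, and $P\trusts\upair{D_j}{O_j}\says\ltriple{D_{j+1}}{O_{j+1}}{\sattdel{\widetilde{\ell_{j+1}}}}$ is derivable for some $\widetilde{\ell_{j+1}}$ with $\ell_j\wkbroadens\widetilde{\ell_{j+1}}\wkbroadens\ell_{j+1}$. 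Initially $\widehat{\gamma}$ is the hypothesis and $C$ is just $(D,O,\ell)$. In case~(d) I replace $\widehat{\ell}$ by $\ell^{+}$, leave $C$ unchanged (invariant~(ii) absorbs the broadening) and move to the earlier line. In case~(e) the antecedent of that Axiom~\ref{ax:3wt} instance supplies a SATA $(D',O')$, labels $\ell'\wkbroadens\ell''\wkbroadens\widehat{\ell}$ and the derivable statements $P\trusts\ltriple{D'}{O'}{\sattdel{\ell'}}$ and $P\trusts\upair{D'}{O'}\says\ltriple{\widehat{D}}{\widehat{O}}{\sattdel{\ell''}}$ (together with a concurrence sattestation uttered by $(\widehat{D},\widehat{O})$, which the theorem need not record); I prepend $(D',O',\ell')$ to $C$, take $\ell''$ as the witness for the newly created first link (chaining $\ell'\wkbroadens\ell''\wkbroadens\widehat{\ell}\wkbroadens\ell_1$ via the old invariant~(ii)), and continue with $\widehat{\gamma}:=P\trusts\ltriple{D'}{O'}{\sattdel{\ell'}}$. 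Each step moves strictly earlier in the finite $\Delta$, so after finitely many steps $\widehat{\gamma}$ is produced by one of the base cases (a)--(c); at that point I prepend $(D_0,O_0,\ell_0):=(\widehat{D},\widehat{O},\widehat{\ell})$ and set $k:=r$. Then Condition~\ref{cond:initsame} is invariant~(ii); the applicable sub-case among Conditions~\ref{cond:itrust}--\ref{cond:lasdel} records which base case fired, with $\ell_0=\widehat{\ell}\wkbroadens\ell$ from invariant~(ii); Condition~\ref{cond:lasttriple} is invariant~(i); the $i\ge 1$ instances of Conditions~\ref{cond:labeli}, \ref{cond:trusti}, and~\ref{cond:linkedtrust} are exactly invariant~(iii); and the $i=0$ instances of Conditions~\ref{cond:labeli} and~\ref{cond:trusti} hold because $\ell_0=\widehat{\ell}\wkbroadens\ell_1$ and $\widehat{\gamma}$ is an assumption in cases (a),(b) or follows by \ref{rule:mp} in case~(c).

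The step I expect to be the real obstacle is the bookkeeping of the two label streams rather than any single case: the label $\widehat{\ell}$ on the current statement only ever broadens (through Axiom~\ref{ax:10wt}), whereas each $\ell_j$ is frozen into the chain at the moment an Axiom~\ref{ax:3wt} step creates link $j$, and the invariant $\widehat{\ell}\wkbroadens\ell_1\wkbroadens\ell$ must be precisely strong enough that both an Axiom~\ref{ax:10wt} broadening and an Axiom~\ref{ax:3wt} step can be absorbed without disturbing Conditions~\ref{cond:labeli} and~\ref{cond:linkedtrust} already secured further down the chain. I would also check that the degenerate case $k=1$ (no Axiom~\ref{ax:3wt} step at all, so $C$ never grows beyond its initial entry) is subsumed uniformly: the final prepend always yields a chain of length $k\ge 1$ with $(D_0,O_0)=(D_1,O_1)$, and Conditions~\ref{cond:labeli}, \ref{cond:trusti}, and~\ref{cond:linkedtrust} over an empty index range are vacuous. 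Finally, as in the earlier propositions, every ``$\ldots$ is derivable'' claim is justified because the formula in question literally occurs as a line of $\Delta$: under our convention that \ref{rule:mp} is never applied to Axiom~\ref{ax:12awt}, the only conjunction-forming steps are \ref{rule:ai} and \ref{rule:mp} on Axioms~\ref{ax:12bwt} and~\ref{ax:12cwt}, all of which preserve the multiset of non-conjunction leaves, so the conjuncts of any antecedent to which \ref{rule:mp} is applied are themselves earlier lines of $\Delta$, and no fresh derivations need be constructed.
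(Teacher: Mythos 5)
Your proposal is correct and follows essentially the same route as the paper's proof: a backward induction through the fixed derivation $\Delta$ (with Axiom~\ref{ax:12awt} excluded), in which each application of Axiom~\ref{ax:3wt} adds a link to the delegation chain, Axiom~\ref{ax:10wt} only broadens the working label without growing the chain, and the induction terminates at an element of \itrust{P}, an instance of \lassattdel{}{}, or an instance of \lasdel{}{}, yielding Conditions~\ref{cond:itrust}--\ref{cond:lasdel} and the identification $\upair{D_0}{O_0}=\upair{D_1}{O_1}$. The only cosmetic differences are that you build the chain by prepending and state the invariants explicitly, whereas the paper indexes downward from $k$ and renumbers at the end, and that you justify derivability of antecedent conjuncts by noting they occur as lines of $\Delta$ rather than by extracting them with Axiom~\ref{ax:12awt} afterwards.
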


\arxivonly{
\begin{proof}[Proof of Thm.~\ref{thm:sattdel}]
Fix a derivation $\Delta$ of $P \trusts \ltriple{D}{O}{\sattdel{\ell}}$ using applications Rules~\ref{rule:mp} and~\ref{rule:ai} to the axioms other than Axiom~\ref{ax:12wt}a, the instances of the Local Axiom Schemata in \las{P}, the elements of \tstmts, and the elements of \itrust{P}.  Set $(D_k,O_k,\ell_k) = (D,O,\ell)$, where $k$ is the number of applications of Modus Ponens in $\Delta$.  Let
\[
\gamma_k = P \trusts \ltriple{D_k}{O_k}{\sattdel{\ell_k}},
\]
and let $\widehat{\gamma} = \gamma_k$.

We work inductively backwards through $\Delta$.  Assume that $\gamma_{i+1} = P\trusts \ltriple{D_{i+1}}{O_{i+1}}{\sattdel{\ell_{i+1}}}$ has been defined but that $\gamma_i$ has not been defined for some $i$, $0\leq i < k$.  Assume that $\widehat{\gamma}$ has been defined as the statement $P\trusts \ltriple{D_{i+1}}{O_{i+1}}{\sattdel{\widehat{\ell}}}$ for some $\widehat{\ell}\wkbroadens\ell_{i+1}$ and that $\widehat{\gamma}$ appears as (a conjunct in) the precondition of an axiom to which \ref{rule:mp} is applied in $\Delta$.

\begin{enumerate}
    \item\label{case:nomp} If $\widehat{\gamma}$ is not obtained in $\Delta$ by applying \ref{rule:mp}, then it must be either the instantiation of \lassattdel{D_{i+1}}{O_{i+1}} or an element of \itrust{P}; it does not match the form of the other local-axiom schemata, any axiom (without the application of \ref{rule:mp}), the ($\mathsf{s\_stmt}$) elements of \tstmts, or the result of applying Rule~\ref{rule:ai}.  We set $\gamma_i = \widehat{\gamma}$.  If $i>0$, we renumber the $\gamma_j$ by subtracting $i$ from each index; this gives $\upair{D_0}{O_0} = \upair{D_1}{O_1}$.  This puts us in Condition~\ref{cond:itrust} or~\ref{cond:lassattdel}, and we stop the induction.

    \item If $\widehat{\gamma}$ is obtained in $\Delta$ by applying \ref{rule:mp}, then it must be an application of \ref{rule:mp} to Axiom~\ref{ax:3wt}, Axiom~\ref{ax:10wt}, or \lasdel{g_1}{g_2}; by inspection, no other axioms (including instances of local-axiom schemata, and excluding Axiom~\ref{ax:12wt} by the discussion above) can produce a statement including \sattdel{}.  We consider these three cases.
    \begin{enumerate}
        \item If $\widehat{\gamma}$ is obtained by applying \ref{rule:mp} to Axiom~\ref{ax:3wt}, then the precondition of the axiom includes a unique term of the form $P\trusts\ltriple{D'}{O'}{\sattdel{\ell'}}$, where $\ell'\wkbroadens\widehat{\gamma}\wkbroadens\ell_{i+1}$.  We let $(D_i,O_i,\ell_i) = (D',O',\ell')$ and set $\gamma_i = P\trusts\ltriple{D_i}{O_i}{\sattdel{\ell_i}}$.  Note that, in the precondition of the axiom, we also have a unique conjunct of the form $P\trusts\upair{D_i}{O_i}\says\ltriple{D_{i+1}}{O_{i+1}}{\sattdel{\ell''}}$ for some $\ell''$ such that $\ell_i\wkbroadens\ell''\wkbroadens\widehat{\ell}\wkbroadens\ell_{i+1}$.  We set $\widehat{\gamma} = \gamma_i$ and continue the induction.

        \item If $\widehat{\gamma} = P\trusts\ltriple{D_{i+1}}{O_{i+1}}{\sattdel{\widehat{\ell}}}$ is obtained by applying Modus Ponens to Axiom~\ref{ax:10wt}, then the precondition of the axiom includes a unique conjunct of the form $P\trusts\ltriple{D_{i+1}}{O_{i+1}}{\sattdel{\widetilde{\ell}}}$ for some $\widetilde{\ell}\wkbroadens\widehat{\ell}$.  We set $\widehat{\gamma} = P\trusts\ltriple{D_{i+1}}{O_{i+1}}{\sattdel{\widetilde{\ell}}}$ and continue the induction.
        
        \item\label{case:mpdel} If $\widehat{\gamma}$ is obtained by applying Modus Ponens to an instance of \lasdel{g_1}{g_2}, then we must have $g_2 = \sattdel{\widehat{\ell}}$.  Furthermore, $P\trusts\ltriple{D_i}{O_i}{g_1}$ must be derivable.  By assumption, $\widehat{\ell}\wkbroadens\ell_k = \ell$.  We set $\gamma_i = \widehat{\gamma}$.  If $i>0$, we renumber the $\gamma_j$ by subtracting $i$ from each index; this gives $\upair{D_0}{O_0} = \upair{D_1}{O_1}$ (satisfying Condition~\ref{cond:initsame}). This puts us in Condition~\ref{cond:lasdel}, and we terminate the induction.
    \end{enumerate}
\end{enumerate}

The proof cases that terminate the induction identify which of Conditions~\ref{cond:itrust}, \ref{cond:lassattdel}, and~\ref{cond:lasdel} is satisfied.  We consider the other conditions that are also claimed to hold.

Condition~\ref{cond:lasttriple} is satisfied by construction.  For Condition~\ref{cond:labeli}, every update to the definition of $\widehat{\ell}$ ensures that $\widehat{\ell}\wkbroadens\ell_{j}$, where $j$ is the smallest index such that $\ell_i$ has been defined.  When a new $\gamma_i$ is defined, it weakly broadens the then-current value of $\widehat{\ell}$, which (as just noted) weakly broadens $\ell_{i+1}$.

Condition~\ref{cond:trusti} says that $\gamma_i$ is derivable.  Each $\gamma_i$ is defined as the then-current value of $\widehat{\gamma}$; that, in turn, is (a conjunct in) a precondition to which \ref{rule:mp} is applied in $\Delta$.  Thus, the precondition is derivable in $\Delta$, and the conjunct defining $\gamma_i$ may be derived from that precondition (or, as discussed above, directly).

Condition~\ref{cond:linkedtrust} follows from the fact that, when $\gamma_i$ is defined for $1\leq i < k$, it must be by application of~\ref{rule:mp} to Axiom~\ref{ax:3wt}.  In this case, we have as a conjunct of the precondition the term $P\trusts\upair{D_i}{O_i}\says\ltriple{D_{i+1}}{O_{i+1}}{\sattdel{\ell''}}$ for some $\ell''$ such that $\ell_i\wkbroadens\ell''\wkbroadens\widehat{\ell}\wkbroadens\ell_{i+1}$.  From the precondition, $P$ may apply Axiom~\ref{ax:12wt}a and \ref{rule:mp} to obtain the claimed term.

$D_0$ and $O_0$ are defined when the induction is terminated.  This is done by either Case~\ref{case:nomp} or Case~\ref{case:mpdel} in the proof; in either case, Condition~\ref{cond:initsame} of the lemma holds.
\end{proof}
} 

\section{Saturation Algorithm}
\label{sec:saturation}

We now describe a saturation algorithm, that, given a principal $P$ and a finite 
initial set $\itrust{P}$ of statements, can be used to derive all sattestor-free 
trust statements derivable from using the axioms and rules.  We assume the 
elements of $\itrust{P}$ are 1) $\mathsf{s\_stmt}$s,  and 2) $\mathsf{t\_stmt}$s of the 
form $P\trusts\varphi$ (``$P$-$\!\!\trusts\!$'' statements).  We characterize a finite superset  of $\itrust{P}$
 that the algorithm   uses to construct derivations, and we show that this can be done without sacrificing completeness.
This means that the number of statements the algorithm has to derive is a function of $\itrust{P}$ and $P$'s local axioms, not of the set of all possible statements.  In particular, this guarantees termination even if the set of labels is infinite.

We begin by making some necessary definitions.

\begin{definition}\label{def:saturation}
If  $SF$ is a set of $\mathsf{stmt}$s and/or $\mathsf{form}$s, we let $\mylabels{SF}$ be the set of labels appearing in $SF$, and $\mynames{SF}$ be the set of identities $D,O$ appearing in $SF$. 
Given a label $g$, and a set of labels  $\anylabelset{}$, we define $\blabel{g}$ to be $\ell$ if $g = \ell$ or $g = \sattboth{\ell}$, where we will use $\sattboth{\ell}$ to include both $\satt{\ell}$ and $\sattdel{\ell}$. We then define $\blabels{\anylabelset{}}$ to be $\{\ell = \blabel{g} \mid g \in \anylabelset{}\}$.
Given a set of labels $\anylabelset{}$, we define $\lc{\anylabelset{}}$ to to be the set of all labels $g$ such that $\blabel{g}$ is in  the closure above under $\narrows$ of the set $\blabels{\anylabelset{}}$.

Let $P$ be a principal,  and let $\las{P}$ be the set of axioms local to $P$.  Let $S$ be a collection of $\says$ and $P$-$\!\!\trusts\!$ statements.  
We define the local axiom label closure of $S$, or $\lalc{S,P}$, to be $\lc{\mylabels{S \cup \las{P}}}$. 
If $S_0$ and $S_1$ are two sets of says and $P$ trusts statements we say that $S_1$  is $(S_0,P)$-\emph{limited} if $\mylabels{S_1}   \subseteq \lalc{S_0,P}$ and $\mynames{S_1} \subseteq \mynames{S_0 \cup \las{P}}$.  
For  any two sets of statements $S_0$ and $S_1$, we denote the  maximal $(S_0,P)$-limited subset of $S_1$ by  $\maxlim{S_1,S_0}$.
\end{definition}

We use the following proposition in proving Thm.~\ref{thm:saturation}:
\begin{proposition} \label{prop:finite}
Given  any set $T$ of $\says$ and $P$-$\!\!\trusts\!$ statements
there is a $(T,P)$-limited set $\mathcal{Q}$  such that  a) $P\trusts \ltriple{D}{O}{\ell} \in \mathcal{Q}$ if and only if b) any statement $P\trusts \ltriple{D}{O}{\ell}$ that can be derived from $T$ can be derived using a sequence of statements from $\mathcal{Q}$.
 \end{proposition}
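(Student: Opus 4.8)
The plan is to take $\mathcal{Q}$ to be the set of all $\says$ statements and $P$-trust statements that are (i)~derivable from $T$, using the axioms, Rules~\ref{rule:mp} and~\ref{rule:ai}, and the instances in $\las{P}$, and (ii)~$(T,P)$-limited. First I would check that $\mathcal{Q}$ is finite. The label set $\lalc{T}{P}=\lc{\mylabels{T\cup\las{P}}}$ is finite: $\mylabels{T\cup\las{P}}$ is finite, the $\narrows$-closure-above of a finite set of labels is a finite union of the sets $\uparrow\{\ell\}$ (each finite by our standing assumption on the label poset), and $\lc{\cdot}$ only adjoins, to each base label, its two $\mathsf{sattestor}$-prefixed variants; $\mynames{T\cup\las{P}}$ is likewise finite; so only finitely many $\says$ and $P$-trust statements have their labels in $\lalc{T}{P}$ and their names in $\mynames{T\cup\las{P}}$. (We take the distinguished label $\bound{}$ to lie in $\lalc{T}{P}$; see the remark on the main obstacle below.)

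Under the natural reading, the biconditional says that, for a sattestor-free $\tau=P\trusts\ltriple{D}{O}{\ell}$, membership $\tau\in\mathcal{Q}$ is equivalent to $\tau$ being derivable from $T$ via some derivation all of whose $\says$ and $P$-trust steps come from $\mathcal{Q}$. This follows from three facts: $\mathcal{Q}$ contains only statements derivable from $T$ (immediate from (i)); (A)~every sattestor-free $\tau$ derivable from $T$ is itself $(T,P)$-limited, hence lies in $\mathcal{Q}$; and (B)~every such $\tau$ has a derivation from $T$ in which every $\says$-statement and $P$-trust-statement step is $(T,P)$-limited, hence lies in $\mathcal{Q}$ (the $\mathsf{wk\_order}$ side conditions, conjunctions of $\mathcal{Q}$-members, and instances of axioms and local-axiom schemata being usable freely, as in the saturation algorithm).

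To prove (A) and (B) I would apply the structural results Prop.~\ref{prop:bound}, Prop.~\ref{prop:sf}, Prop.~\ref{prop:satt}, and Thm.~\ref{thm:sattdel}, after partitioning $T$ into its $P$-trust statements and its $\says$ statements (these playing the roles of $\itrust{P}$ and $\tstmts$; their elements have exactly the conjunction-free shapes those results require). The argument is a well-founded recursion on derivation length: applied to a derivable trust statement of any label form ($\bound{}$, $\ell$, $\satt{\ell}$, or $\sattdel{\ell}$), the matching result exhibits it as rooted in an element of $\itrust{P}\subseteq T$ or an instance of a schema in $\las{P}$, reached possibly through a chain of $\says$ statements drawn from $\tstmts\subseteq T$ together with intermediate trust statements each having a strictly shorter derivation---so the recursion is well founded, and, as Thm.~\ref{thm:sattdel} ensures, is not a cycle of dependencies. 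Along the way, every label that occurs is a broadening of a base label actually occurring in $T\cup\las{P}$: Axioms~\ref{ax:1wt}, \ref{ax:9wt}, and~\ref{ax:10wt} broaden only within some $\uparrow\{\cdot\}$ of such a label, Axioms~\ref{ax:2wt}--\ref{ax:4wt} keep the labels that occur between the sattestor's label and the sattestee's (the latter weakly narrowing the former), and Axioms~\ref{ax:7wt} and~\ref{ax:8wt} merely pass from a $\sattestordel{}$-label to the corresponding $\sattestor{}$-label; so all of these labels lie in $\lalc{T}{P}$. Likewise, no axiom or instance of~\ref{las:del} introduces a fresh domain or onion, and every name produced by~\ref{las:satt} or~\ref{las:sattdel}, or appearing in a $\says$ statement, already lies in $\mynames{T\cup\las{P}}$. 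Reassembling the layers forward---the $\says$ chain, the local root, the broadening steps, and the applications of Axioms~\ref{ax:2wt}--\ref{ax:4wt}---yields a derivation of $\tau$ all of whose $\says$ and $P$-trust steps are $(T,P)$-limited, establishing (A) and (B).

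I expect the main obstacle to be the layer-by-layer bookkeeping inside (A) and (B): confirming, axiom by axiom, that nothing introduces a label outside $\lalc{T}{P}$ or a name outside $\mynames{T\cup\las{P}}$, and in particular exploiting the narrowing/broadening side conditions of Axioms~\ref{ax:2wt}--\ref{ax:4wt} (so that a derived label always remains in the upward closure of a base label that genuinely appears) in combination with the recursion across the four label forms. A second delicate point is the distinguished label $\bound{}$, which is incomparable to every other label and therefore is not reached by the upward-closure argument; since Axiom~\ref{ax:5wt} is the only way to introduce $\bound{}$ beyond its explicit occurrence (and an instance of~\ref{las:satt} or~\ref{las:sattdel} in $\las{P}$ already supplies $\satt{\bound{}}$ or $\sattdel{\bound{}}$ when instantiated at $\bound{}$), one handles it by the convention that $\bound{}\in\lalc{T}{P}$, so that the finitely many derivable statements $P\trusts\ltriple{D}{O}{\bound{}}$ are captured by $\mathcal{Q}$ as well.
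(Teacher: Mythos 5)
Your proposal is correct and follows the same overall skeleton as the paper's proof---take $\mathcal{Q}$ to be the $(T,P)$-limited derivable $\says$ and $P$-$\!\!\trusts$ statements, show that every derivable sattestor-free trust statement is itself $(T,P)$-limited, and show that its derivation can be arranged to pass only through $(T,P)$-limited statements---but the technical core is reached by a genuinely different route. The paper proves three bespoke lemmata (Lemmas~\ref{lem:antecon}, \ref{lem:sayscon}, and~\ref{lem:conscon}: a derivable $P\trusts\ltriple{D}{O}{\ell}$ forces a matching self-assertion already in $T$; a derivable $P\trusts\upair{D}{O}\says\varphi$ forces $\upair{D}{O}\says\varphi\in T$; and the label-limited consequent of Axiom~\ref{ax:2wt}, the \sattdel{}-producing axioms, and \ref{las:del} forces a label-limited antecedent) and then runs a pruning argument: locate the \emph{last} statement in a derivation whose label escapes $\lalc{T}{P}$, observe via Lemma~\ref{lem:conscon} that it cannot feed any step producing a limited consequent, and delete it. You instead route everything through the structural results of Section~\ref{sec:ax-proof} (Props.~\ref{prop:bound}, \ref{prop:sf}, \ref{prop:satt}, and Thm.~\ref{thm:sattdel}), recursing backwards to the local roots and then reassembling a limited derivation forward. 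Your approach buys reuse of machinery already proved in detail and makes the axiom-by-axiom label bookkeeping more systematic; the paper's approach is more self-contained and local (it never needs the full backward-chaining case analysis, only the observation about the last offending statement). Your explicit treatment of $\bound{}$---which, being incomparable to all other labels, is not captured by the upward-closure construction of $\lc{\cdot}$ and so must be stipulated to lie in $\lalc{T}{P}$---addresses a corner case the paper's proof passes over silently, and is a worthwhile addition.
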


\arxivonly{We present three lemmata that we need to prove Prop.~\ref{prop:finite}.  The first lemma shows that any statement of the form $P\trusts \ltriple{D}{O}{\ell}$ that can be derived from a set $S$ is in $\maxlim{\Sigma,S}$.

\begin{lemma} \label{lem:antecon}
For any set of statements $S$, let $P\trusts \ltriple{D}{O}{\ell}$ be derivable from $S$.  Then $P\trusts \ltriple{D}{O}{\ell}$ is in $\maxlim{\Sigma,S}$.
\end{lemma}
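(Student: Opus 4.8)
The plan is to argue by induction on the length of a derivation $\Delta$ witnessing that $P\trusts\ltriple{D}{O}{\ell}$ is derivable from $S$ (using the axioms, the rules~\ref{rule:mp} and~\ref{rule:ai}, and the local-axiom-schema instances in~\las{P}), processing the lines of $\Delta$ in order. Membership of $P\trusts\ltriple{D}{O}{\ell}$ in $\Sigma$ is immediate from its derivability, so the content is to show that, as a singleton, it is $(S,P)$-limited, i.e.\ that $\{D,O\}\subseteq\mynames{S\cup\las{P}}$ and $\ell\in\lc{\mylabels{S\cup\las{P}}}$; since $(S,P)$-limitedness is inherited by subsets and by unions, $\maxlim{\Sigma,S}$ is exactly the set of elements of $\Sigma$ satisfying both of these. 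To push the induction through I would strengthen the claim so that it applies to every trust statement $P\trusts\psi$ (with $\psi$ a $\mathsf{bd\_l\_pair}$ or an $\mathsf{s\_stmt}$) occurring as a line of $\Delta$: (i) $\mynames{\psi}\subseteq\mynames{S\cup\las{P}}$, and (ii) if $\psi=\ltriple{D'}{O'}{g}$ with $g\in\allsflabels$, then $g\in\lc{\mylabels{S\cup\las{P}}}$. The conjunction axioms~\ref{ax:12wt} and the rule~\ref{rule:ai} merely combine or rearrange statements and preserve~(i)--(ii); per the discussion in Sec.~\ref{sec:trust-derivation-defs} we may assume Axiom~\ref{ax:12awt} is never applied; and the $\mathsf{wk\_order}$ facts used in $\Delta$ only relate a label to a weak broadening of it, hence introduce nothing outside the $\narrows$-broadening closure of the labels already present.

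For~(i), the point is that no axiom and neither rule introduces a domain or onion not already present in a statement it consumes, the only exceptions being $\lassatt{D}{O}$ and $\lassattdel{D}{O}$, whose names $D,O$ are schema parameters and so lie in $\mynames{\las{P}}$. In Axioms~\ref{ax:2wt}--\ref{ax:4wt} the names of the conclusion already occur in a $\says$-conjunct of the antecedent; and a line of the form $P\trusts\bigl(\upair{D'}{O'}\says\varphi\bigr)$ can arise only as an element of $S$, by~\ref{rule:mp} from a $\says$-statement in $S$ together with Axiom~\ref{ax:11wt}, or by~\ref{rule:mp} from Axiom~\ref{ax:6wt} or~\ref{ax:8wt} applied to such a line -- none of which enlarges the name set beyond $\mynames{S}$. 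Hence~(i) holds at every line, in particular the last.

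The real work is~(ii), and the obstacle is that $\lassatt{D}{O}$ and $\lassattdel{D}{O}$ let $P$ trust an \emph{arbitrary}, possibly brand-new, label -- but always behind a $\satt{\cdot}$, resp.\ $\sattdel{\cdot}$, wrapper -- so the tempting claim ``every statement in $\Delta$ is $(S,P)$-limited'' is false. Two observations control this. First, the broadening axioms~\ref{ax:1wt}, \ref{ax:9wt}, \ref{ax:10wt} and the axioms~\ref{ax:7wt}, \ref{ax:8wt} relating $\sattdel{\cdot}$ to $\satt{\cdot}$ replace a label by a weak broadening or narrowing of it while preserving the presence or absence of the $\satt{\cdot}/\sattdel{\cdot}$ wrapper, so they maintain~(ii) because $\lc{\mylabels{S\cup\las{P}}}$ is closed upward under $\narrows$. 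Second, the only axioms and schema instances that can produce a sattestor-free $\mathsf{bd\_l\_pair}$-trust conclusion are Axiom~\ref{ax:1wt} (whose conclusion label is a broadening of the sattestor-free label of an earlier line, in the closure by induction), Axiom~\ref{ax:2wt} (whose conclusion label is exactly the sattestor-free label $\ell_2$ of one of its $\says$-conjuncts), Axiom~\ref{ax:5wt} (whose conclusion label is $\bound{}$), and $\lasdel{g_1}{g_2}$ (whose conclusion label is the schema parameter $g_2\in\mylabels{\las{P}}$); none of these reads a label out from under a $\satt{\cdot}/\sattdel{\cdot}$. By the analysis in the preceding paragraph, a $\says$-conjunct carrying a sattestor-free label is either in $S$ or is $\ltriple{D_1}{O_1}{\bound{}}$ produced by Axiom~\ref{ax:6wt}, so the label $\ell_2$ arising in the Axiom~\ref{ax:2wt} case lies in $\mylabels{S}\cup\{\bound{}\}$. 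Consequently the arbitrary labels introduced by $\lassatt{D}{O}$ and $\lassattdel{D}{O}$ stay confined under their wrappers and never become sattestor-free labels, and every sattestor-free trust label in $\Delta$ lies in the $\narrows$-broadening closure of $\mylabels{S\cup\las{P}}\cup\{\bound{}\}$, which is $\lc{\mylabels{S\cup\las{P}}}$ once one takes the distinguished essential label $\bound{}$ to lie in this closure (it may be assumed present in the ambient label set, or adjoined to the seed set at no cost to finiteness). This establishes~(ii); applying the strengthened claim to the last line of $\Delta$ yields the lemma.

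I expect the delicate step to be exactly the case analysis of Axioms~\ref{ax:2wt}--\ref{ax:4wt}: one must keep straight which of the labels $\ell_1,\ell_2,\ell_3$ occurring in each of them ends up in the conclusion, and under which (if any) $\satt{\cdot}/\sattdel{\cdot}$ wrapper, in order to confirm that a label arriving via $\lassatt{D}{O}$ or $\lassattdel{D}{O}$ is never the one that becomes sattestor-free. Once that bookkeeping is settled, the remainder is routine inspection of the axiom and schema list.
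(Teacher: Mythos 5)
Your proof is correct, and it rests on the same key fact as the paper's --- that top-level $\says$ statements are not derivable from anything else and so must already be in $S$, anchoring both the names and the sattestor-free labels --- but you take a genuinely more systematic route. The paper's proof is two sentences: it observes that deriving $P\trusts\ltriple{D}{O}{\ell}$ (in the interesting case) forces the presence of a self-sattestation $\upair{D}{O}\says\ltriple{D}{O}{\ell'}$ with $\ell'\wknarrows\ell$, which must lie in $S$, and concludes immediately; it silently elides the cases where the statement is in $S$ outright or arrives via \ref{las:del}, and it says nothing about why the arbitrary labels admitted by \lassatt{D}{O} and \lassattdel{D}{O} cannot leak into a sattestor-free position. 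Your full induction with the strengthened invariant (names confined to $\mynames{S\cup\las{P}}$ at every line; sattestor-free trust labels confined to the broadening closure) closes exactly those gaps, and your case analysis of which axioms can produce a sattestor-free conclusion (\ref{ax:1wt}, \ref{ax:2wt}, \ref{ax:5wt}, \ref{las:del}) is the bookkeeping the paper leaves implicit. The one residual wrinkle --- whether $\bound{}$, being $\narrows$-incomparable to everything, lies in $\lalc{S,P}$ --- you flag and resolve by convention; the paper does not address it at all. So your argument buys rigor and completeness of case coverage at the cost of length; the paper's buys brevity by leaning on the reader to check the uncovered cases.
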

} 
\arxivonly{\begin{proof}
This follows from the fact that $P\trusts \ltriple{D}{O}{\ell}$ cannot be derived unless $\upair{D}{O}\says \ltriple{D}{O}{\mathsf{\ell'}}$ is present, where $\ell' \wknarrows \ell$. That statement is not derivable from any other statement, hence it must be in $S$.  Thus $P\trusts \ltriple{D}{O}{\ell}$ is in $\maxlim{\Sigma,S}$
\end{proof}
} 

\arxivonly{The next two lemmata are used to help show that any statement $s$ of the form $P\trusts \ltriple{D}{O}{\ell}$ such that  $\mylabels{\{s\}} \subseteq \maxlim{\Sigma,S}$, that is derivable using statements from $\Sigma$, can be derived using statements from $\maxlim{\Sigma,S}$.

\begin{lemma} \label{lem:sayscon}
For any  set of statements $S$, and any statement $s$ of the form $P \trusts \upair{D_1}{O_1} \says \phi$ that is derivable from $S$,  we have $\lc{\phi} \subseteq \lalc{S}{P}$.
\end{lemma}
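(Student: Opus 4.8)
The plan is to argue by induction on the length of a derivation $\Delta$ of $s = P\trusts(\upair{D_1}{O_1}\says\phi)$ from $S$, the local axiom schema instances in $\las{P}$, and the logical axioms (using \ref{rule:mp} and \ref{rule:ai}, and, by the standing assumption of this section, with no application of \ref{rule:mp} to Axiom~\ref{ax:12wt}a). Throughout, $\lc{\phi}$ is read as $\lc{\mylabels{\{\phi\}}}$, so it is determined by the at most one label occurring in $\phi$, the bare binding $\tpair{D}{O}$ being treated as carrying $\bound{}$. Two elementary facts about $\lc{\cdot}$ do all the work: (i) $\lc{\cdot}$ is a monotone closure operator, so $\anylabelset{}\subseteq\anylabelset{}'$ implies $\lc{\anylabelset{}}\subseteq\lc{\anylabelset{}'}$; and (ii) $\lc{\anylabelset{}}$ depends on $\anylabelset{}$ only through $\blabels{\anylabelset{}}$, so that in particular $\lc{\{\ell\}}=\lc{\{\satt{\ell}\}}=\lc{\{\sattdel{\ell}\}}$. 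Since $\lalc{S}{P}=\lc{\mylabels{S\cup\las{P}}}$, fact (i) gives $\lc{\mylabels{S'}}\subseteq\lalc{S}{P}$ for every $S'\subseteq S$.

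The first substantive step is to determine how $s$ can arise in $\Delta$. Inspecting the axioms together with the schemata \ref{las:del}, \ref{las:satt}, and \ref{las:sattdel}, the only ones that can conclude a statement of the shape $P\trusts(\upair{D_1}{O_1}\says\phi)$ are Axioms~\ref{ax:6wt}, \ref{ax:8wt}, and~\ref{ax:11wt}: every other axiom and every schema instance concludes a trust-in-binding statement, a bare binding, or a label-ordering statement; Axiom~\ref{ax:12wt}a is excluded by assumption; and Axioms~\ref{ax:12wt}b,c together with \ref{rule:ai} produce only conjunctions. Hence $s$ is one of the following: (a) an element of $S$; (b) the result of \ref{rule:mp} applied to Axiom~\ref{ax:11wt} and a bare $\mathsf{s\_stmt}$ $\upair{D_1}{O_1}\says\phi$ which, since no axiom or rule can conclude a bare $\mathsf{s\_stmt}$, must itself belong to $S$; (c) the result of \ref{rule:mp} applied to Axiom~\ref{ax:8wt}, so that $\phi = \ltriple{D_2}{O_2}{\satt{\ell}}$ and $P\trusts(\upair{D_1}{O_1}\says\ltriple{D_2}{O_2}{\sattdel{\ell}})$ is derivable by a shorter derivation; or (d) the result of \ref{rule:mp} applied to Axiom~\ref{ax:6wt}, so that $\phi = \ltriple{D_1}{O_1}{\bound{}}$ and some $P\trusts(\upair{D_1}{O_1}\says\varphi')$ is derivable.

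It would then remain to discharge these four cases. In (a) and (b) the (at most one) label occurring in $\phi$ occurs syntactically in $S$, so $\mylabels{\{\phi\}}\subseteq\mylabels{S}$ and hence $\lc{\phi}\subseteq\lc{\mylabels{S}}\subseteq\lalc{S}{P}$ by (i). In (c), the inductive hypothesis applied to the shorter derivation of $P\trusts(\upair{D_1}{O_1}\says\ltriple{D_2}{O_2}{\sattdel{\ell}})$ gives $\lc{\ltriple{D_2}{O_2}{\sattdel{\ell}}}\subseteq\lalc{S}{P}$; since $\blabel{\satt{\ell}}=\ell=\blabel{\sattdel{\ell}}$, fact (ii) yields $\lc{\phi}=\lc{\{\satt{\ell}\}}=\lc{\{\sattdel{\ell}\}}=\lc{\ltriple{D_2}{O_2}{\sattdel{\ell}}}\subseteq\lalc{S}{P}$. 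In (d), $\mylabels{\{\phi\}}=\{\bound{}\}$, and since $\bound{}$ is incomparable with every other label, $\lc{\phi}=\lc{\{\bound{}\}}=\{\bound{},\satt{\bound{}},\sattdel{\bound{}}\}$; this is contained in $\lalc{S}{P}$ provided $\bound{}\in\mylabels{S\cup\las{P}}$, which I would secure by noting that the derivability of a $\says$ statement forces $S$ to contain at least one $\says$ statement (bare or trusted by $P$), and --- in keeping with the paper's convention that $\tpair{D}{O}$ abbreviates $\ltriple{D}{O}{\bound{}}$ and with Axiom~\ref{ax:6wt} --- every such statement carries the essential label $\bound{}$.

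I expect the main obstacle to be the exhaustiveness check in the second step: verifying, axiom by axiom and schema by schema, that nothing other than Axioms~\ref{ax:6wt}, \ref{ax:8wt}, \ref{ax:11wt} or direct membership in $S$ can produce a statement of the form $P\trusts(\upair{D_1}{O_1}\says\phi)$, and in particular that a bare $\mathsf{s\_stmt}$ occurring anywhere in a derivation must be an assumption. The only conceptually non-routine point is subcase (d), where one must record, once and for all, that $\bound{}$ is always available in $\lalc{S}{P}$ --- a consequence of the implicit $\bound{}$ label carried by every statement over which these derivations are taken.
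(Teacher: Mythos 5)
Your proof is correct and follows the same basic strategy as the paper's---tracing a derived $\says$-trust statement back to something that must occur syntactically in $S$---but it is considerably more careful. The paper's proof is a single sentence asserting that the \emph{only} way to derive $P\trusts(\upair{D_1}{O_1}\says\phi)$ is for $\upair{D_1}{O_1}\says\phi$ to appear in $S$, i.e., only your case (b) via Axiom~\ref{ax:11wt}; it silently ignores Axioms~\ref{ax:6wt} and~\ref{ax:8wt}, which also conclude statements of exactly this shape, as well as the possibility that the trust statement is itself an element of $S$. Your cases (a)--(c) close these holes cleanly, and your observation that $\lc{\cdot}$ depends only on $\blabel{\cdot}$ is exactly what makes Axiom~\ref{ax:8wt} harmless. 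Your case (d) is the one place where neither your argument nor the paper's is airtight as literally stated: Axiom~\ref{ax:6wt} introduces the label $\bound{}$, and $\bound{}\in\lalc{S}{P}$ does not follow from Definition~\ref{def:saturation} alone, since the sayer of an $\mathsf{s\_stmt}$ is a $\mathsf{maybe}$ carrying no label---so a set $S$ whose only label is, say, $\mathsf{USG}$ would have $\bound{}\notin\lalc{S}{P}$ even though $P\trusts\upair{D_1}{O_1}\says\ltriple{D_1}{O_1}{\bound{}}$ is derivable. You are right that this must be repaired by a convention that $\bound{}$ is always counted among the labels of any nonempty statement set; the paper leaves this implicit (and, since $\bound{}$ is incomparable to every other label, the repair is harmless for the downstream finiteness argument). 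Net: your proof buys a complete, checkable case analysis and surfaces an edge case that the published one-liner glosses over.
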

} 
\arxivonly{\begin{proof}
This follows directly from the fact that the  only way such a statement could be derived  from $S$ is if the statement $S = \upair{D_1}{O_1} \says \phi$ appears in $S$.
\end{proof}
} 

\arxivonly{
\begin{lemma} \label{lem:conscon}
For any  $S$, let $\mathsf{A}$ be a valid instantiation  of  either axiom A2,  any global axiom A , whose consequent  is  $P \trusts \ltriple{D}{O}{\sattdel{\ell}}$, or a member of the local axiom schema \ref{las:del}.   Then, if the labels of the consequent are in  $ \lalc{S}{P}$, so are the labels of the antecedent.
\end{lemma}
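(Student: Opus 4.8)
The plan is to prove the lemma by a direct inspection of the (finitely many) axiom forms named in the statement --- Axiom~\ref{ax:2wt}, Axioms~\ref{ax:3wt} and~\ref{ax:10wt} (the global axioms whose consequent has the shape $P\trusts\ltriple{D}{O}{\sattdel{\ell}}$), and instances of the schema \ref{las:del} --- showing in each case that every label occurring in the antecedent of $\mathsf{A}$ lies in $\lalc{S}{P}$ whenever every label of its consequent does. The single observation doing the work throughout is that $\lalc{S}{P}=\lc{\mylabels{S\cup\las{P}}}$ is an \emph{upward} closure under $\narrows$: a label $h$ lies in $\lalc{S}{P}$ precisely when $\blabel{h}$ weakly broadens some base label that occurs in $S\cup\las{P}$. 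Since $\narrows$ (hence $\wkbroadens$) is transitive, this set is closed under passing to broader base labels: if $g\in\lalc{S}{P}$ and $\blabel{h}\wkbroadens\blabel{g}$, then $h\in\lalc{S}{P}$. (The $\bound{}$/incomparability caveat is harmless here, since a base label that is \emph{strictly} broader than another is never $\bound{}$.)

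I would first dispatch the three ``poset-only'' cases. For an instance $\lasdel{g_1}{g_2}$ of \ref{las:del}, the instance itself belongs to $\las{P}$, so $g_1\in\mylabels{\las{P}}\subseteq\mylabels{S\cup\las{P}}$ and hence $g_1\in\lalc{S}{P}$ unconditionally; as $g_1$ is the only label in the antecedent, this case is done. For Axiom~\ref{ax:10wt} the antecedent labels are $\sattdel{\ell_1}$ and the $\ell_1,\ell_2$ of the conjunct $\ell_1\wkbroadens\ell_2$; the consequent label is $\sattdel{\ell_2}$, so $\ell_2\in\lalc{S}{P}$ by hypothesis and, since $\ell_1$ broadens $\ell_2$, the observation above gives $\ell_1,\sattdel{\ell_1}\in\lalc{S}{P}$. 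Axiom~\ref{ax:3wt} is the same phenomenon: its consequent base label is the \emph{narrowest} of $\ell_1\wkbroadens\ell_2\wkbroadens\ell_3$, namely $\ell_3$, and \emph{every} label in its antecedent --- $\sattdel{\ell_1}$, $\sattdel{\ell_2}$, $\satt{\ell_3}$, and the $\ell_i$ of the ordering conjunct --- has base label broadening $\ell_3$, so all of them lie in $\lalc{S}{P}$ as soon as $\ell_3$ does.

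The remaining case, Axiom~\ref{ax:2wt}, is the main obstacle, because it is the only named axiom whose consequent base label is not the narrowest of the three labels it mentions: here the consequent label is the \emph{middle} label $\ell_2$ of $\ell_1\wkbroadens\ell_2\wkbroadens\ell_3$. The labels $\ell_1$ (occurring as $\satt{\ell_1}$ and in the ordering conjunct) and $\ell_2$ broaden or equal $\ell_2$, so they lie in $\lalc{S}{P}$ by the observation and the hypothesis. But $\ell_3$ is \emph{strictly narrower} than $\ell_2$ and so cannot be reached by upward closure; it occurs only inside the conjunct $P\trusts\upair{D_2}{O_2}\says\ltriple{D_2}{O_2}{\ell_3}$. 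Here one must use that $\mathsf{A}$ is being applied, via \ref{rule:mp}, as a step of a derivation from $S$, so that this $\says$-conjunct is itself derivable from $S$; Lemma~\ref{lem:sayscon} then gives $\lc{\ltriple{D_2}{O_2}{\ell_3}}\subseteq\lalc{S}{P}$ and in particular $\ell_3\in\lalc{S}{P}$, completing the case. I would make this dependence on the derivation context explicit: unlike the $\sattdel{}$-axioms and \ref{las:del}, Axiom~\ref{ax:2wt} genuinely forces a label strictly below everything in its consequent into its antecedent, and such a label can only have entered the derivation through a $\says$-statement ultimately drawn from $S$ --- which is exactly what Lemma~\ref{lem:sayscon} records.
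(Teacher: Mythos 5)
Your proof is correct and takes essentially the same route as the paper's (very terse) proof: case inspection of the named axioms plus an appeal to Lemma~\ref{lem:sayscon} for the one label that upward closure cannot reach, namely the $\ell_3$ in the sattestee's $\says$-conjunct of Axiom~\ref{ax:2wt}. Your write-up just makes explicit what the paper leaves to ``inspection,'' including the useful observation that $\lalc{S}{P}$ is upward closed under broadening and the point that the $\ell_3$ case genuinely depends on the axiom instance occurring in a derivation from $S$.
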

} 
\arxivonly{\begin{proof}
The proof follows by inspection of the axioms and local axiom schemata, and by the fact that, by Lemma \ref{lem:sayscon}, for  any derivable statement $P\trusts \upair{D_1}{O_1} \says \phi$, then  $\lc{\phi} \subseteq \lalc{S}{P}$.
\end{proof}
} 

\arxivonly{We now turn to the proof of Prop~\ref{prop:finite}.
\begin{proof}[Proof of Prop.~\ref{prop:finite}]

The b) implies a) direction  follows directly from Lemma \ref{lem:antecon}.

 We prove the a) implies b) direction by assuming a  $\mathcal{Q}$ that satisfies a), and then show that when unnecessary statements are removed it satisfies b). 
 To prove b), we begin by proving  that any sattestor-free statement $P \trusts \ltriple{D}{O}{\ell}$ that is derivable from $\mathcal{S}$  is $(T,P)$-limited.  .Next we need to show that if there is a derivation $Der$ of a $(T,P)$-limited statement $P \trusts \ltriple{D}{O}{\ell}$, there is a $(T,P)$-limited derivation of $P \trusts \ltriple{D}{O}{\ell}$.

 We first prove that $\mynames{\Der} \subseteq \mynames{T  \cup \las{P}}$.  
We note that every  SATA $\upair{D}{O}$ that appears in the the conclusion of an axiom also appears in the antecedent, except for instances  of the schemata \ref{las:satt} and \ref{las:sattdel}.   However, in order for those local axioms to prove anything except more instances of  $P \trusts \ltriple{D'}{O'}{\sattboth{\ell'}}$, we need $P$ to trust either that  $\upair{D'}{O'}$ has said something, or  somebody has said something about $\upair{D'}{O'}$.   In that case  $\upair{D'}{O'} \in \mynames{T\cup\las{P}}$. Otherwise, we can remove this unnecessary  $P \trusts \ltriple{D'}{O'}{\sattboth{\ell'}}$ statements.

The proof that $\mylabels{Der}   \subseteq \lalc{T}{P}$ is similar but is complicated by the use of the narrowing relation, so it takes a little more work.  Let  $s'= P \trusts \ltriple{D'}{O'}{g}$ be the last statement in $Der$  such that $\blabel{g} \not \in \lalc{T}{P}$. We note that  $s$ cannot be used as input to an instantiation of an antecedent Axiom A2, or any axiom  whose  consequent is of the form $P \trusts \ltriple{D}{O}{\sattboth{\ell'}}$ and contains $P \trusts \ltriple{D'}{O'}{\sattboth{\ell''}}$ in the antecedent, because by Lemma \ref{lem:conscon}, this would result in a consequent $s = P \trusts \ltriple{D}{O}{\sattboth{\hat{\ell}}}$ such that $\ell \wkbroadens \hat{\ell}$, so that $\hat{\ell} \not \in \lalc{T}{P}$, contradicting the assumption.   Thus it can be removed without affecting the correctness of the derivation. 
 \end{proof}
} 
 
Our saturation algorithm is as follows:
Let $T$ be a set of non-negated atomic statements. 
Let $S = \maxlim{\mathsf{All},T}$, where $\mathsf{All}$ is the set of all possible non-negated atomic statements.
\begin{enumerate}
\item Initialize: $\present= T$, $\past := \emptyset$,  $\rest :=  S \setminus \present$.
\item While $\present \ne \past$, do:
\item \hspace{10pt} $U:= \present$
\item \hspace{10pt} $V := \rest$ 
\item \hspace{10pt} $U := U \cup Q$, where $Q$ is the set of all elements of $V$ derivable from $U$ in one step.
\item \hspace{10pt} $\past := \present$,  $\present := U$, $\rest := S \setminus \present$
\item End while
\item $\resulta = \present$
\item $\resultb = \sattfree{\present}$

\item  Return $\resulta$, $\resultb$
 \end{enumerate}

 We give an example of the saturation algorithm at work below.
 Step 0 gives the statements $P$ trusts initially, while Steps 1 and 2 gives the statements $P$ derives at each respective step.
 
 \paragraph{Step 0}
\begin{itemize}
\item[01]   $P\trusts \ltriple{D_1}{O_1}{\satt{\mathsf{a}}}$ 
 \item[02]   $P\trusts \upair{D_1}{O_1}\says \ltriple{D_2}{O_2}{\mathsf{a:b}}$
 \item[03]  $\upair{D_2}{O_2}\says \ltriple{D_2}{O_2}{\mathsf{a:b}}$  
\end{itemize}


\paragraph{Step 1}
\begin{itemize}
\item[11] $P\trusts \ltriple{D_1}{O_1}{\satt{\mathsf{a:b}}}$ (from statement 01 via Axiom~\ref{ax:9wt})
\item[12]$P\trusts \upair{D_1}{O_1}\says \ltriple{D_2}{O_2}{\mathsf{a:b}}$ (from statement   02 via Axiom~\ref{ax:11wt})
\item[13] $P\trusts \upair{D_2}{O_2} \says  \ltriple{D_2}{O_2}{\mathsf{a:b}}$ (from statement  03  via Axiom~\ref{ax:11wt})
\end{itemize}

\paragraph{Step 2}
\begin{itemize}
\item[21] $P\trusts \ltriple{D_2}{O_2}{\mathsf{a:b}}$ (from statements  01, 12, and 13  via Axiom~\ref{ax:2wt})
\end{itemize}

\begin{theorem}\label{thm:saturation}
The saturation algorithm is sound (any  statement in the
saturation set $\resultb$ is derivable from $T$), complete (if a sattestation-free 
statement is derivable from $T$, it appears in $\resultb$), 
and terminating,  with  $\mylabels{\resultb} \subseteq \lc{T}$ and $\mynames{\resultb} \subseteq \mynames{T}$. 
\end{theorem}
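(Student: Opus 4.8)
The plan is to treat termination, soundness (together with the label and name bounds), and completeness separately, with completeness resting on Prop.~\ref{prop:finite}. I would begin with termination. The key point is that the set $S$ used to initialize the algorithm is finite: being $(T,P)$-limited, every label occurring in a statement of $S$ lies in $\lalc{T}{P} = \lc{\mylabels{T\cup\las{P}}}$, and this label set is finite because $\blabels{\mylabels{T\cup\las{P}}}$ is a finite set of $\satt{}$/$\sattdel{}$-free labels, its upward closure under $\narrows$ is finite by our standing assumption that each $\uparrow\{\ell\}$ is finite, and re-decorating with $\satt{}$ or $\sattdel{}$ at most triples the count; likewise every identity occurring in $S$ lies in the finite set $\mynames{T\cup\las{P}}$. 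Since statements of $S$ are built from finitely many labels, finitely many identities, and finitely many syntactic shapes, $S$ is finite. The \textbf{while} loop keeps $\present$ monotonically non-decreasing and always contained in $S$; on each pass either $\present = \past$ and the loop halts, or $\present\supsetneq\past$, so it runs at most $\card{S}$ times.

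Next I would prove soundness and the two bounds by a loop invariant: at the top of each iteration, every element of $\present$ is derivable from $T$ using Rules~\ref{rule:mp} and~\ref{rule:ai}, the axioms, and the instances in $\las{P}$. This holds at initialization, where $\present = T$, and it is preserved, since the set added in line~5 consists of consequents of instances of axioms or local-axiom schemata all of whose antecedent conjuncts already lie in $\present$ — these may be combined by \ref{rule:ai} and the consequent extracted by \ref{rule:mp} — so by the inductive hypothesis they too are derivable from $T$. Hence $\resulta = \present$ is sound, and so is $\resultb = \sattfree{\present}\subseteq\resulta$. For the bounds, $\resultb\subseteq\present\subseteq S$ and $S$ is $(T,P)$-limited, so $\mylabels{\resultb}\subseteq\lalc{T}{P}$ and $\mynames{\resultb}\subseteq\mynames{T\cup\las{P}}$, which are the asserted containments.

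The heart of the proof is completeness, and I expect the main obstacle to lie here: the algorithm only ever adds statements drawn from the \emph{finite} set $S$, so I must know that confining derivations to $S$ loses no sattestor-free conclusion — and this is exactly what Prop.~\ref{prop:finite} provides. Given a sattestor-free $\varphi = P\trusts\ltriple{D}{O}{\ell}$ derivable from $T$, Prop.~\ref{prop:finite} yields a $(T,P)$-limited set $\mathcal{Q}$ through which $\varphi$ has a derivation; since $\mathcal{Q}$ consists of non-negated atomic statements it is contained in the set of all such statements, and since $S$ is the maximal $(T,P)$-limited such set, $\mathcal{Q}\subseteq S$. Let $\present^{*}$ be the final value of $\present$; because the loop halted, the final value of $\past$ equals $\present^{*}$, and since the loop body replaces $\present$ by $\present$ together with every element of $S\setminus\present$ one-step-derivable from $\present$, this shows $\present^{*}\supseteq T$ and $\present^{*}$ is closed under one-step derivation within $S$. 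I would then induct along the atomic statements of a fixed $\mathcal{Q}$-derivation of $\varphi$: each such statement is either an element of $T$ (hence in $\present^{*}$) or is the consequent of an instance of an axiom or of a schema in $\las{P}$ whose antecedent conjuncts are all earlier atomic statements of the derivation, hence — by the inductive hypothesis — members of $\present^{*}\cap S$; by closure within $S$, the consequent lies in $\present^{*}$ as well. Therefore $\varphi\in\present^{*}$, and being sattestor-free, $\varphi\in\sattfree{\present^{*}} = \resultb$.

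The only delicate points I anticipate are bookkeeping ones: reconciling the algorithm's informal notion of ``derivable in one step'' with the axiomatic system — absorbing the \ref{rule:ai} steps that assemble the conjunctive antecedents of axioms such as \ref{ax:2wt} and \ref{ax:3wt} into a single step, and invoking the standing convention (Sec.~\ref{sec:trust-derivation-defs}) that \ref{ax:12awt} is never applied — and observing that a $\says$-statement occurring in a derivation must be a premise from $T$ rather than something inferred, which is what makes the completeness induction's case analysis exhaustive. None of these should present real difficulty, but they are where care is needed.
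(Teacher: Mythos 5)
Your proposal is correct and follows essentially the same route as the paper's proof: termination from the finiteness of $S$, soundness directly from the definition of derivation, the label/name bounds inherited from the $(T,P)$-limitedness of $S$, and completeness by combining Prop.~\ref{prop:finite} with an induction over a derivation confined to $S$ (the paper phrases this as induction on minimal derivation length $\dlen{s}$, which is the same argument). You fill in details the paper elides --- notably why $S$ is finite and the explicit loop invariant --- but introduce no new ideas or gaps.
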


\begin{proof} 
Soundness  follows directly from the
definition of derivation. Termination follows from the finiteness of $S$, and $\mylabels{\resultb} \in \lc{T}$ and $\mynames{\resultb} \in \mynames{T}$ follows from the fact that same is true of $S$.

For completeness,   Prop.~\ref{prop:finite} gives us that any sattestation-free statement in $S$ derivable from $T$ is derivable using statements in $S$,  and that any sattestation-free statement derivable from $T$ is itself in $S$.  All that remains is to show that such a derivation can be constructed using the elements of $\resulta$.

We first note  that all statements added by the algorithm are non-negated, so addition of a statement $s$ to $\present$ does not invalidate any statement that could be derived from the statements in $\present$ alone.  Thus all we need to show is that $\resulta$ contains every  statement derivable from $T$ via statements in $S$.  We prove this by induction on  $\dlen{s}$, the minimum length of any  derivation $\Der$ of $s$ from $T$ using only statements from $S$, where the length of $\Der$ is defined to be the length of the longest path in $\Der$, considered as a graph.
First, if $\dlen{s} = 0$, then $s \in T \subseteq \resulta$. Suppose now that if $\dlen{s} < n$ then $s \in \resulta$.  Suppose that there is a statement $s$ such that $\dlen{s} = n$ and $s \not \in \resulta$.  That means that,  if $\Der$ is a derivation of $s$ from $T$ using only statements from $S$ then at least one  statement $s'$ used  in  $\Der$ is not in $\resulta$.  However, by construction, $\dlen{s'} < n$, contradicting the induction hypotheses. 
\end{proof}

\section{Conclusion}
\label{sec:conclusion}

We have set out a logic of sattestation that is a foundation for reasoning about
contextual roots of trust for internet domains if they are
SATAs. Previous work has presented logics for reasoning about
contextual trust in the sense of trust in statements of any kind by a
principal on a topic about which the principal is trusted to have
expertise~\cite{inferring-trust}. But we believe this is the first
logic to support reasoning about the identity and authentication
of internet addresses grounded in the contextual trust of the
attesting principal. Contextual trust is based in sattestations,
an implemented technology for creating, distributing, and validating
credentials issued by SATAs contextually trusted to issue
them~\cite{satas-wpes21}.
Sattestations have been previously implemented, but we introduce
a logic to reason about sattestations in the rich practical enviroments
where contextual trust may depend on organizational or social structure
and where trust might be delegated within such structures. 
Despite the richness and complexity of our logic, we prove that
it is sound with respect to the set of sattestation credentials
trusted by a principal at a given time. We also prove results
about what trust statements can be derived from a set of initial
trust assumptions.


\bibliographystyle{IEEEtran}
\bibliography{jsm24csf-full}

\end{document}